\newtheorem{theorem}{Theorem}
\newtheorem{lemma}{Lemma}
\newtheorem{proposition}{Proposition}
\theoremstyle{definition}
\newtheorem{definition}{Definition}
\theoremstyle{remark}
\newtheorem{example}{Example}
\newdimen\CdotAxis
\newcommand*{\CdotAux}[3]{%
  {%
    \settoheight\CdotAxis{$#2\vcenter{}$}%
    \sbox0{%
      \raisebox\CdotAxis{%
        \scalebox{#1}{%
          \raisebox{-1.1pt}{%
            $\mathsurround=0pt #2#3$%
          }%
        }%
      }%
    }%
    \dp0=0pt %
    \sbox2{$#2\bullet$}%
    \ifdim\ht2<\ht0 %
      \ht0=\ht2 %
    \fi
    \sbox2{$\mathsurround=0pt #2#3$}%
    \hbox to \wd2{\hss\usebox{0}\hss}%
  }%
}
\newcommand{\proj}                     { {\mathsf{\uppi}} }
\newcommand{\pair}[2]{\langle #1,#2\rangle}
\newcommand{\nf}{\mathsf{NF}}
\newcommand{\Ecrom}[3]{#2 \parallel_{#1} #3}
\newcommand{\sq}[1]  {{\boldsymbol{#1}}}
\newcommand{\nor}[1] {{\mathcal{#1}}}
\newcommand{\comment}[1]{}
\newcommand{\IMPL}{\rightarrow}
\newcommand{\ET}{\wedge}
\newcommand{\FAL}{\bot}
\newcommand{\NON}{\neg}
\newcommand{\impl}{\rightarrow}
\newcommand{\et}{\wedge}
\newcommand{\vel}{\vee}
\newcommand{\fal}{\bot}
\newcommand{\ver}{\top}
\newcommand{\non}{\neg}
\newcommand{\emiddle}{\mathrm{EM}}
\newcommand{\lamg}{\lambda _{\mathrm{G}}}
\newcommand{\lama}{{\lambda _{\mathrm{CL}} }}
\newcommand{\lamem}{{\lambda _{\mathrm{CL}} }}
\newcommand{\inappendix}[1]{}
\newcommand{\lam}{\lambda}
\newcommand{\lan}{\langle}
\newcommand{\ran}{\rangle}
\newcommand{\efq}[2]{#2 \, \mathsf{efq}_{#1 }}
\newcommand{\exfalso}{\mathsf{efq}}
\newcommand{\p}{\parallel}
\newcommand{\send}[1]{\overline{#1}\,}
\newcommand{\inj}{\upiota}
\newcommand{\termt}{\mathrm{t}\hspace{-0.6pt}\mathrm{t}}
\newcommand{\mapstopar}{\rightrightarrows}
\newcommand{\NJ}{\mathrm{NJ}}
\def\NJ{{\bf NJ}}
\def\Deduce#1{\hbox{$\hphantom{#1}$\kern\inferLabelSkip\DeduceSym\kern\inferLab
elSkip$#1$}}
\newcommand\mydots{\!\makebox[1em][c]{.\hfil.\hfil.}}
\newcommand{\noappendix}[1]{\comment{#1}}
\title{Classical Proofs as Parallel Programs\thanks{Supported by FWF: grant Y544-N2 and project W1255-N23.}}
\author{Federico Aschieri
\institute{TU Wien, Vienna, Austria}
\and
Agata Ciabattoni
\institute{TU Wien, Vienna, Austria}
\and
Francesco A. Genco
\institute{TU Wien, Vienna, Austria}
}
\begin{document}
\maketitle

\begin{abstract} We introduce a first proofs-as-parallel-programs
correspondence for classical logic. We define a parallel and more
powerful extension of the simply typed $\lambda$-calculus
corresponding to an analytic natural deduction based on the excluded
middle law.  The resulting functional language features a natural
higher-order communication mechanism between processes, which
also supports broadcasting. The normalization procedure makes use of reductions that implement novel
techniques for handling and transmitting process closures.
\end{abstract}


\section{Introduction}

\label{Introduction} The $\lambda$-calculus is the heart of functional
programming languages.  The deep connection between its programs and
intuitionistic proofs is known as Curry--Howard correspondence; useful
consequences of this correspondence are the termination of well-typed
functional programs and the possibility of writing provably correct
programs, see, e.g.,~\cite{Wadler}.


The extension of the Curry--Howard correspondence to classical logic
came many years later, with Griffin's discovery \cite{Griffin} that
Pierce's law $((A \to B) \to A) \to A$ provides a type for the call/cc
operator of Scheme. Since then many $\lambda$-calculi motivated by the
correspondence with classical logic have been introduced. Remarkably,
different formalizations of the same logic lead to different results.
In particular computational interpretations of classical logic are
very sensible to the selected logical formalism and the concrete
coding of classical reasoning. The main two choices for the former are
natural deduction and sequent calculus, while for the latter are
Pierce's Law, reasoning by contradiction $(\lnot A \rightarrow \bot
)\rightarrow A$, multi-conclusion deduction and the excluded middle
law $\emiddle$ $\lnot A \lor A$.


For instance, reasoning by contradiction gives rise to control
operators and corresponds very directly to Parigot $\lambda
\mu$-calculus \cite{Parigot}, which relates to classical natural
deduction as the $\lambda$-calculus relates to intuitionistic natural
deduction $\NJ$~\cite{Sorensen}.  Examples of multi-conclusion
deductions in classical sequent calculus are \cite{Herbelin,WadlerSQ}.
These Curry--Howard correspondences match perfectly classical logic
and computation: each step of program reduction corresponds to a proof
transformation, and the evaluation of $\lambda$-terms corresponds to
normalization, a procedure that makes proofs {\em analytic}, i.e.\
only containing formulas that are subformulas of premises and
conclusion.





All the functional programming languages resulting from Curry–-Howard
correspondences for propositional classical logic model {\em
sequential computation} and extend the $\lambda$-calculus via
programming concepts such as continuations~\cite{Krivine}.
Remarkably, none of them is based on $\lnot A \lor A$, which appears
to be related to some form of \emph{parallelism},
see~\cite{DanosKrivine}.  A natural way to make this parallelism
explicit would be to extend the Curry--Howard correspondence to $\NJ$
augmented with suitable rules for $\emiddle$.  So far, however, it has
been a long-standing open problem to provide an analytic natural
deduction based on $\emiddle$
\emph{and} enjoying a significant computational interpretation.  In the
only known Curry--Howard correspondence for  $\emiddle$-based
propositional natural deduction~\cite{deGrooteex} proof-terms are not
interpreted as parallel programs and do not correspond to analytic
proofs.  The calculus in~\cite{deGrooteex} extends the
simply typed $\lambda$-calculus with an operator for exception
handling. The lack of analiticity, however, implies that important
reduction rules are missing: there are some exceptions that should be
raised, but are not. These exceptions contain free variables which are
locally bound and cannot be delivered to the exception handler,
otherwise they would become free variables in the whole term.  Hence a
crucial missing element in~\cite{deGrooteex} is \emph{code mobility},
along with the techniques handling the bindings between a piece of
code and its environment~\cite{Fuggetta}. Parallelism and code
mobility were instead employed to define the calculus  $\lamg$~\cite{lics2017}  and provide a
Curry--Howard correspondence for G\"odel logic, a well known logic
intermediate between classical and intuitionistic logic.




We exploit here the techniques developed for
$\lamg$ in order to extract from propositional
classical logic a new parallel $\lambda$-calculus with a remarkably
simpler communication mechanism. The calculus $\lamem$, as we call it,
extends simply typed $\lambda$-calculus by a communication
mechanism that interprets the natural deduction rule for $A \vee \neg
A$.
Processes of $\lamem$  only communicate through private 
channels  similar  to those bound by the  restriction 
operator $\nu$ in the $\pi$-calculus,
the most widespread
formalism for modeling concurrent systems~\cite{Milner,
sangiorgiwalker2003}.  These channels are introduced by the
typing rule for $\emiddle$ and their behavior during
communication is defined by  $\lamem$ reduction rules. The
basic communication reductions, called {\em basic cross reductions}, behave as follows:
let $ \mathcal{C}$ be a process ready to send a
message to another process $\mathcal{D}$ through a channel $a$, in
symbols $\; \mathcal{C}[\send{a}\, u]\parallel_{a} \mathcal{D}\; $;
then, if
$u$ is  data or a closed process, the result of the communication is
$\mathcal{D}[ u /a ] $, similarly to asynchronous
$\pi$-calculus~\cite{HT} or the concurrent $\lam$-calculus proposed in~\cite{Boudol89}.  Although simple, this mechanism makes $\lamem$ more powerful than simply typed
$\lambda$-calculus and propositional $\lam _\mu$
(see Prop.~\ref{power}).  Remarkably it can also model
\emph{races}, situations in which several processes compete for a
limited amount of resources;
moreover, in contrast with calculi based on point-to-point
communication \cite{EM} as $\pi$-calculus or $\lamg$~\cite{lics2017} (see 
Sec.~\ref{Sec:exprpower}) $\lamem$ also renders \emph{broadcasting}.

The normalization of $\lamem$ ensures that natural deduction
derivations can be transformed into analytic ones. Similarly to
the normalization for $\lamg$, this procedure requires additional
reduction rules ({\em cross reductions}) that offer a solution to a
``{\em fundamental problem in any distributed implementation of a
statically-typed, higher-order programming language}'': how to send
open processes and transmit function closures from one node to another
(cf.\ the description of Cloud Haskell in \cite{EBPJ2011}).  Indeed,
the process $u$ to be sent might not be closed and might need
resources that are or will be available in $ \mathcal{C}$.  {\em Cross
reductions} allow $u$ to be transmitted and create a new communication
channel $b$ for transferring the complete closure of $u$ afterwards.
Not even higher-order $\pi$-calculus directly supports such a
mechanism; let alone $\pi$-calculus, whose communication only applies
to data or channels.  As discussed in Ex.~\ref{ex:optimize}, our code
mobility can be used for program optimization.  The technically
sophisticated normalization proof in Sec.~\ref{sec:norm} adapts the
proof for $\lamg$~\cite{lics2017}.

\section{$\lamem$: a Curry--Howard interpretation of Classical Logic}
\label{sec:lamemcalc}

We introduce $\lamem$, our typed parallel $\lambda$-calculus for
Classical Logic. $\lamem$ extends the standard Curry--Howard
correspondence~\cite{Howard,Wadler} for the intuitionistic natural
deduction $\NJ$~\cite{Prawitz} by a parallel operator that interprets the rule
$(\emiddle)$ for $A \vee \neg A$. 
Table~\ref{tab:typing} defines a type assignment for $\lamem$-terms,
called \textbf{proof terms} and denoted by $t, u, v \dots$, which is
isomorphic to the natural deduction system $\NJ + (\emiddle)$.  The typing
rules for axioms, implication, conjunction, disjunction and
ex-falso-quodlibet are those for the simply typed
$\lambda$-calculus~\cite{Girard}. Parallelism is introduced by the
$(\emiddle )$ rule .  
The contraction rule, useful to define parallel terms that do not communicate,
is in fact redundant from the  point of view of proof theory.


\begin{table}[h!]
\hrule 
\smallskip

  \centering
  \begin{footnotesize}
  \begin{center}
    $\begin{array}{c} x^A: A \\ 
     \end{array}$ 
$ \qquad \vcenter{\infer[(\text{contr.})]{ u \p v : A}{ u : A
& v : A }}$ $\qquad \vcenter{\infer{ \langle u,t\rangle: A \wedge B}{u:A & t:B}}
     \qquad \vcenter{\infer{u\,\pi_0: A}{u: A\wedge B}} \qquad
     \vcenter{\infer{u\,\pi_1: B}{u: A\wedge B}}$
\bigskip

$ \vcenter{\infer{\lambda x^{A} u: A\rightarrow
B}{\infer*{u:B}{[x^{A}: A]}}} \quad\qquad \vcenter{\infer{tu:B}{ t: A\IMPL
B & u:A} } \hspace{200pt}\begin{array}[c]{c} \infer{
\efq{P}{u}: P}{ u: \bot} \\ \text{ with } P \text{ atomic and } P \neq
\bot
\end{array}$

\vspace{-50pt}

     \smallskip
      
    $\hspace{50pt} \vcenter{\infer[(\emiddle )]{u \parallel_{a} v:
B}{\infer*{u:B}{[a^{\NON A}: \non
A]}&\infer*{v:B}{[a^{ A}: A]}}}  $
\end{center} \vspace{-8pt}where all the occurrences of $a$ in $u$ and $v$ are respectively of the form $a^{\non A}$ and $a^{A}$
   \end{footnotesize}
\smallskip
\hrule 
\medskip
\caption{Type assignments for $\lamem$ terms and natural deduction rules.}\label{tab:typeem}
\label{tab:typing}
\end{table}

The reduction rules of $\lamem$ are in Figure~\ref{fig:redem}. They
consist of the usual simply typed $\lambda$-calculus reductions,
instances of $\vel$ permutations adapted to the $\p$ operator, and new
communication reductions: {\it Basic Cross Reductions} and {\it Cross Reductions}. Since we are dealing with a Curry--Howard
correspondence, every reduction rule of $\lamem$ corresponds to a
reduction for the natural deduction calculus $\NJ + (\emiddle)$. 

Before explaining the calculus reductions, we recall the essential
terminology and definitions, see e.g.~\cite{Girard}. Proof terms may contain variables
$x_{0}^{A}, x_{1}^{A}, x_{2}^{A}, \ldots$ of type $A$ for every
formula $A$; these variables are denoted by $x^{A},$ $ y^{A}, $ $
z^{A} , \ldots,$ $ a^{A}, b^{A}, c^{A}$ and whenever the type is
irrelevant by $x, y, z, \ldots, a, b$. For clarity, the $(\emiddle )$ rule
variables will be often denoted by $a, b, c, \dots$ but they are not in a syntactic
category apart.  A variable $x^{A}$ that occurs in a term of the form
$\lambda x^{A} u$ is called \textbf{$\lambda$-variable} and a variable
$a$ that occurs in a term $u\parallel_{a} v$ is called
\textbf{channel} or \textbf{communication variable} and represents a
\emph{private} communication channel between the processes
$u$ and $v$. We adopt the convention that $a^{\non A}$ and $
a^{A}$ are denoted by $\send{a}$ and $a$ respectively, where
unambiguous.
Free and bound variables of a term are defined as usual. For the new
term $\Ecrom{a}{u}{v}$, all free occurrences of $a$ in $u$ and $v$ are
bound in $\Ecrom{a}{u}{v}$. We assume the standard renaming rules and
$\alpha$-equivalences that are used to avoid the capture of variables
in the reductions.

\noindent \textbf{Notation}. The connective $\rightarrow$ and $\et $ associate
to the right and by $\langle t_{1}, t_{2}, \ldots, t_{n}\rangle $ we
denote the term $\langle t_{1}, \langle t_{2}, \ldots \langle t_{n-1},
t_{n}\rangle\ldots \rangle\rangle$ (which is $\termt : \ver $ if
$n=0$) and by $\proj_{i}$, for $i=0, \ldots, n$, the sequence
$\pi_{1}\ldots \pi_{1} \pi_{0}$ selecting the $(i+1)$th element of the
sequence.  The expression $A_{1} \ET \dots \ET A_{n}$ denotes $\top$
if $n=0$.
As usual, we use $\NON
A $ as shorthand notation for $A \IMPL \FAL$.

We write $\Gamma\vdash t: A$ if $\Gamma= x_{1}: A_{1}, \ldots, x_{n}:
A_{n}$ and all free variables of a proof term $t: A$ are in $x_{1},
\ldots, x_{n}$. From the logical point of view, $t$ represents a
natural deduction of $A$ from the hypotheses $A_{1}, \ldots, A_{n}$.
If the symbol $\parallel$ does not occur in $t$, then
$t$ is a \textbf{simply typed $\lambda$-term} representing an
intuitionistic deduction.


We first explain the cross reductions from a proof-theoretical point
of view.  Basic cross reductions correspond to the following 
transformation of natural deduction derivations
\begin{small}
  \[\vcenter{\infer{C}{\infer*{C}{\infer{\FAL}{[\NON A] &
\deduce{A}{\delta}}} && \infer*{C}{[A]}} }\quad \mapsto \quad
\vcenter{ \infer*{C}{\deduce{A}{\delta}}} \]
\end{small}where no assumption in $\delta$ is discharged below $\bot$
and above $C$. When this is the case, intuitively, the displayed
instance of $(\emiddle )$ is hiding some redex that should be reduced. The
reduction precisely exposes this potential redex~\cite{Prawitz} and we
are thus able to reduce it. More instances of $\NON A$ and $A$ might
occur in the respective branches. This, in combination with the
contraction rule (see Table~\ref{tab:typing}), gives rise to races and
broadcasting, as explained in the computational interpretation
below. Cross reductions correspond to
\begin{footnotesize}
  \[ \vcenter{\infer{C}{\infer*{C}{\infer{ \fal}{[\NON A]^1 &
            \deduce{A}{\deduce{\delta}{[\Gamma]}}}}&&\infer*{C}{[A]^1}}
    }\qquad \mapsto \qquad \vcenter{\infer[{}^2]{ C}{
        \infer[{}^1]{C}{\infer*{ C}{\infer{ \FAL}{[\NON\bigwedge
              \Gamma]^2 & \infer={\bigwedge \Gamma }{ [\Gamma]}
            }}&&\infer*{C}{[A]^1}} &&\infer*{C}{\deduce{A}{\deduce
            {\delta}{\infer=[]{\Gamma}{[ \bigwedge \Gamma]^{2}
              }}}}}}\]
\end{footnotesize}As before, in the right derivation we prove by
$\delta$ all assumptions $A$, but now we also need to
discharge the assumptions $\Gamma$ of $\delta$ in the rightmost
branch; which are discharged in the left derivation between $\bot$ and
$C$. This is done by  $2$: a new application of $( \emiddle )$ to
the conjunction  $\bigwedge\Gamma $ of such assumptions. Accordingly,
we use {\small $ \AxiomC{$ \lnot \bigwedge\Gamma$}
\AxiomC{$\bigwedge\Gamma$} \BinaryInfC{$\bot$} \DisplayProof $ } in the
leftmost branch.  To discharge the remaining occurrences of
$\lnot A$, we need to keep the original instance $1$ of $( \emiddle ) $,
and thus the central branch of the resulting proof is just a
duplicate.

Before discussing the computational content of the calculus we
introduce a few more definitions. 

\begin{definition}[Simple Parallel Term]\label{def:simpparterm} A
\textbf{simple parallel term} is a $\lamem$-term $t_{1}\p\dots \p t_{n}$, where each
$t_{i}$, for $1\leq i\leq n$, is a simply typed $\lambda$-term.
\end{definition}

\begin{definition}\label{defisimplec}\label{def:simpleparallelcont}
A \textbf{context} $\mathcal{C}[\ ]$ is a
$\lamem$-term with some fixed variable $[\;]$ occurring exactly
once. 
\begin{itemize}
\item A \textbf{simple context} is a context which is a simply typed
$\lambda$-term.
\item A \textbf{simple parallel context} is a context which is a simple
parallel term.
\end{itemize}
For any $\lamem$-term $u$ with the same type as $[\;]$,
$\mathcal{C}[u]$ denotes the term obtained replacing $[\;]$ with $u$ in
$\mathcal{C}[\ ]$, \emph{without renaming bound variables}.
\end{definition}

 \begin{definition}[Multiple Substitution]\label{defimultsubst}
Let $u$ be a proof term, $\sq{x}=x_{0}^{A_{0}}, \ldots, x_{n}^{A_{n}}$ a sequence of  variables and $v: A_{0}\land \ldots \land A_{n}$. The substitution 
$u^{v/ \sq x}:=u[v\,\proj_{0}/x_{0}^{A_{0}} \ldots \,v\,\proj_{n}/x_{n}^{A_{n}}]$
replaces each variable $x_{i}^{A_{i}}$ of
any term $u$ with the $i$th projection of $v$.
\end{definition}


 \paragraph{Basic Cross Reductions} can be fired
whenever the free variables of $t$ are also free in
$\mathcal{C}[\send{a}\, t]$. 
In particular, $t$ may represent executable
code or data and directly replaces all occurrences of the channel
endpoint $a$. In case there is only one sender and one receiver, the
reduction \[\mathcal{C}[\send{a}\, t]\parallel_{a} \mathcal{D}
    \ \mapsto \ \mathcal{D}[ t /a ] \]corresponds to the reduction axiom of the
asynchronous $\pi$-calculus~\cite{HT}.
 In general, $\mathcal{C}[\send{a}\, t]$ has the shape \[{\mathcal C}_1 \p \dots \p {\mathcal C}_i[\send{a} \, t] \p \dots
     \p {\mathcal C}_n \]where more than one process might have a message to send. In this
case, there is a \emph{race} among the processes $\mathcal{C}_{j}$
that contain some message $\send{a}t_{j}$ and compete to transmit it
to $\mathcal{D}$.
The sender ${\mathcal C}_i[\send{a} \, t]$ is selected
non-deterministically and communicates its message to $\mathcal{D}$:
\begin{small}
  \[ ({\mathcal C}_1 \p \dots \p {\mathcal C}_{i}[\send{a} \, t] \p
    \dots \p {\mathcal C}_n) \p_{a} {\mathcal D} \; \mapsto \;
    \mathcal{D}[t/a]\]
\end{small}Since the receiving term ${\mathcal D}$ exhausts all its channels
$a:A$ to receive $t$, we remove all processes containing $\send{a} :
\NON A$ and obtain a term without 
$a$.
We also point out that $\mathcal{D}$ is an arbitrary term, so it may well be a sequence of parallel process $\mathcal{D}_{1} \parallel \dots \parallel \mathcal{D}_{m}$. In this case, $\mathcal{C}_{i} [\send{a} \, t]$ \emph{broadcasts} its message $t$ to  $\mathcal{D}_{1}, \dots, \mathcal{D}_{m}$:
\[\mathcal{D}[t/a]=\mathcal{D}_{1}[t/a] \parallel \dots \parallel \mathcal{D}_{m} [t/a]\]


\paragraph{Cross Reductions}
address a crucial problem of functional languages
with higher-order message exchange: transmitting function closures  (see \cite{EBPJ2011}). 
The solution provided by our logical types is that 
function closures are transmitted
in two steps: first, the function code, 
then, when it is available,  the evaluation environment.  
As a result, we can communicate open $\lamem$-terms 
which are closed in their original environment, and fill later their free variables, when they will be instantiated. The process of handling and transmitting function closures is typed by  a new instance of $(\emiddle)$.
 For example: 
assume that a channel $a$ is used to send an arbitrary sub-process
$u$ from a 
process ${\mathcal C}$ to a process ${\mathcal D}$ (below left). Since
$u$ might not be closed, it might depend
on its environment for providing values for its free variables --
$\sq{y}$ in the example is bound by a $\lambda$ outside $u$. This
issue is solved in the cross reduction by a fresh channel $b$ which
redirects $\sq{y}$ -- the remaining part of the closure -- to the new
location of $u$ (below right).
\begin{center}
\includegraphics[width=0.22\textwidth]{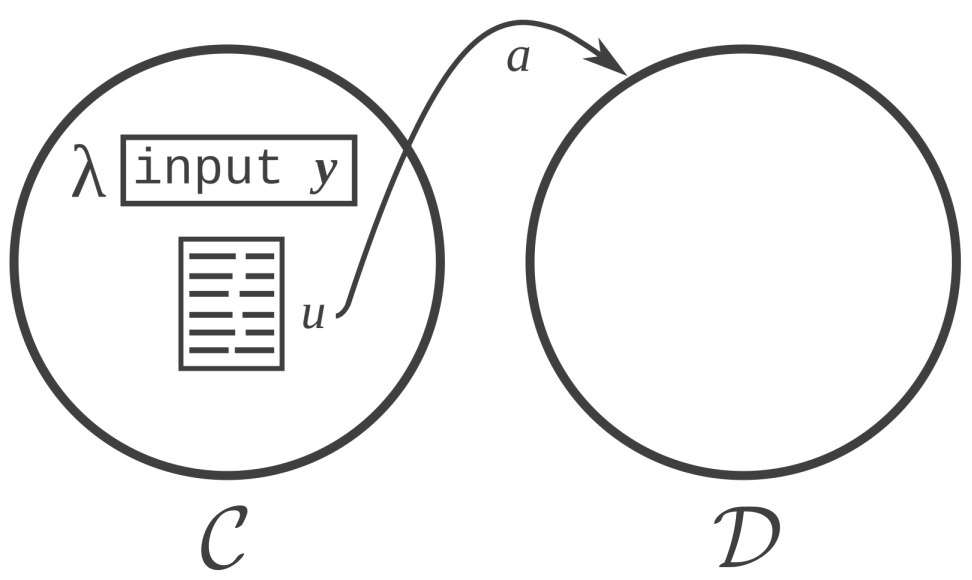}
\hspace{50pt}
\includegraphics[width=0.35\textwidth]{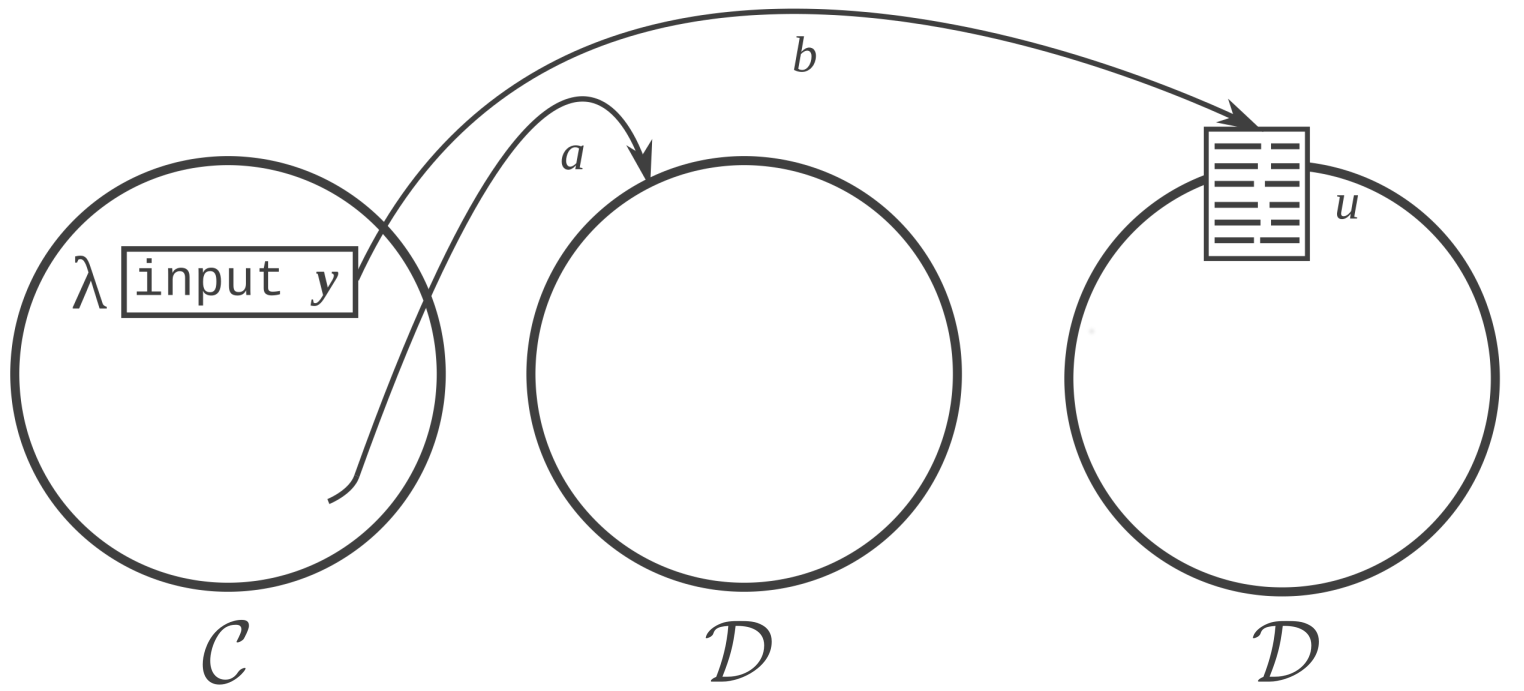}
\end{center} The old channel $a$ is kept for further
messages that $\mathcal{C}$ might want to exchange with $\mathcal{D}$. 
Technically, the cross reduction has this shape: $\quad ( \dots \p {\mathcal C}[\send{a} \, u] \p \dots ) \p_{a}
    \mathcal{D} \mapsto \ ( \mathcal{C}[\send{b} \, \lan \sq{y} \ran
    ] \parallel_{a} \mathcal{D} ) \, \parallel_{b}\, \mathcal{D}[ u^{b
      / \sq{y} } /a ]\quad $
On the one hand, the open term $u$ is replaced in $\mathcal{C}$ by the new
channel $\send{b}$ applied to the sequence $\sq{y}$ of the free
variables of $u$; on the other hand, $u$ is sent to the term
 $\mathcal{D}$ as $u^{b /
 \sq{y}}$, so its free variables are removed and replaced by the
channel endpoint $b$ that will receive their future instantiation.





\paragraph{Communication Permutations}  The only permutations for $\p$
that are not standard $\vel$-permutation-like are \[(u\parallel_{a} v)\parallel_{b} w \mapsto (u\parallel_{b}
w)\parallel_{a} (v\parallel_{b} w) \quad \text{ and } \quad w \parallel_{b}
(u\parallel_{a} v) \mapsto (w\parallel_{b} u)\parallel_{a}
(w\parallel_{b} v)\]These kind of permutations are between parallel operators themselves
and address the \emph{scope extrusion} issue of private channels. For instance, let us consider the term
$(v\parallel_{a} \mathcal{C}[b\, a] )\parallel_{b} w $
Here the process $\mathcal{C}[b\, a]$ wishes to send the channel $a$ to $w$ along the channel $b$, but this is not possible being the channel $a$ private. This issue is solved in the $\pi$-calculus using the congruence $\nu a (P\, |\, Q)\, |\, R \equiv \nu a (P\, |\, Q\, |\, R) $.
 Classical logic offers and actually forces a different solution, which is not just permuting $w$ inward but also duplicating it:
 \[
\begin{small}
  ( v \parallel_{a} \mathcal{C}[b\, a])\parallel_{b} w \mapsto
     (v \parallel_{b} w)\parallel_{a} (\mathcal{C}[b\, a]\parallel_{b}
     w) 
 \end{small} 
 \quad \text{ after this reduction $\mathcal{C}[b\, a]$ can send $a$
   to $w$. } \]

We provide now the last definitions needed to formally define the reduction rules of $\lamem$. We start with the notion of \emph{strong subformula}, which is
key for proving Normalization (Section~\ref{sec:norm}). 
\begin{definition}[Prime Formulas and Factors \cite{Krivine1}]
  A formula is said to be \textbf{prime} if it is not a
  conjunction. Every formula is 
  a conjunction of prime
  formulas, called \textbf{prime factors}. 
\end{definition}

\begin{definition}[Strong Subformula~\cite{lics2017}]\label{definitionstrongsubf}
$B$ is said to be a \textbf{strong subformula} of a formula $A$, if $B$ is a proper subformula of some prime proper subformula  of  $A$. 
\end{definition}

Note that here prime formulas are either atomic formulas or arrow
formulas, so a strong subformula of $A$ must be actually a proper
subformula of an arrow proper subformula of $A$.
The following characterization from~\cite{lics2017} of the strong subformula relation will be often used.  
\begin{proposition}[Characterization of Strong Subformulas]\label{propositionstrongsubf}
If $B$ is a strong subformula of $A$:
\begin{itemize}
\item if $A=A_{1}\land \mydots \land A_{n}$, $n>0$ and $A_{1}, \mydots,
  A_{n}$ are prime, then $B$ is a proper subformula of some $A_{1}, \mydots, A_{n}$; 
\item if $A=C\rightarrow D$, then $B$ is a proper subformula of a prime factor of $C$ or $D$. 
\end{itemize}
\end{proposition}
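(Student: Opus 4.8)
The plan is to prove the Characterization of Strong Subformulas by unfolding Definition~\ref{definitionstrongsubf} and doing a case analysis on the shape of $A$. Recall that $B$ is a strong subformula of $A$ iff there is a prime \emph{proper} subformula $P$ of $A$ such that $B$ is a proper subformula of $P$; and, as the note after the definition observes, a prime formula here is either atomic or an arrow, so $P$ must in fact be an arrow formula (an atomic formula has no proper subformula to play the role of $B$). So fix such a $P = P_1 \rightarrow P_2$ with $B$ a proper subformula of $P$.

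First I would handle the conjunction case $A = A_1 \wedge \dots \wedge A_n$ with $n > 0$ and each $A_i$ prime. The proper subformulas of $A$ are: the $A_i$ themselves together with all their subformulas, i.e.\ $\bigcup_i \mathrm{Sub}(A_i)$ minus $A$ itself — more carefully, every proper subformula of $A$ is a subformula of some $A_i$ (since $A$ is built from the $A_i$ purely by $\wedge$, and the only formula in $\mathrm{Sub}(A) \setminus \bigcup_i \mathrm{Sub}(A_i)$ is $A$ itself, using that each $A_i$ is prime hence distinct from a nontrivial conjunction — one should note the $A_i$ are prime so none of the intermediate conjuncts $A_k \wedge \dots \wedge A_n$ equals any $A_i$). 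Hence the arrow formula $P$, being a proper subformula of $A$, is a subformula of some $A_j$. Then $B$, a proper subformula of $P$, is a proper subformula of $A_j$, which is what the first bullet claims.

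Next the arrow case $A = C \rightarrow D$. Its proper subformulas are exactly the subformulas of $C$ together with the subformulas of $D$. So $P$ is a subformula of $C$ or of $D$; say of $C$ (the case of $D$ is symmetric). Now I use the elementary fact that any prime subformula of a formula $C$ is a subformula of some prime factor of $C$: write $C = C_1 \wedge \dots \wedge C_m$ with the $C_i$ prime (the prime factors of $C$); any subformula of $C$ is either $C$ itself — excluded, as $P$ is prime and, if $m>1$, $C$ is not — or a subformula of some $C_i$. Hence $P$ is a subformula of some prime factor $C_i$ of $C$, and therefore $B$, being a proper subformula of $P$, is a proper subformula of $C_i$, i.e.\ of a prime factor of $C$ (or of $D$ in the symmetric case). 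This gives the second bullet.

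I do not expect a serious obstacle here; the only point requiring a little care is the bookkeeping about which formulas count as proper subformulas of a conjunction, and in particular the repeated use of primality to rule out the degenerate identifications (a prime formula is never a proper "intermediate conjunct", and a prime $P$ cannot equal a genuine conjunction $C$ when $C$ has more than one prime factor). It is also worth stating once, up front, the reduction "the witnessing prime proper subformula $P$ must be an arrow, not an atom", since both bullets rely on $B$ being a \emph{proper} subformula of $P$ and an atomic $P$ has none. With that observation in hand, each bullet is a two-line subformula-chasing argument.
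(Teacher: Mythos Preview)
Your approach is exactly the paper's: unfold the definition to obtain a prime proper subformula $P$ of $A$ with $B$ a proper subformula of $P$, then locate $P$ among the prime pieces of $A$ in each case. One small slip: in the conjunction case you assert that \emph{every} proper subformula of $A=A_{1}\land\dots\land A_{n}$ is a subformula of some $A_{i}$, but for $n\geq 3$ the intermediate conjuncts $A_{k}\land\dots\land A_{n}$ are proper subformulas of $A$ not contained in any $A_{i}$; the correct (and sufficient) claim, which is what the paper states, is that every \emph{prime} proper subformula of $A$ is a subformula of some $A_{i}$, and since your $P$ is prime this is all you need. Your preliminary remark that $P$ must in fact be an arrow is true but unnecessary for the argument; primality alone does the work in both bullets.
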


\begin{figure*}[h]
\vspace{-5pt}
\hrule
\begin{footnotesize}

\smallskip

  \begin{center}

    \smallskip




  \textbf{Intuitionistic Reductions}

  $(\lambda x^{\scriptscriptstyle A}\, u)t\mapsto
  u[t/x^{\scriptscriptstyle A}] \qquad \pair{u_0}{u_1}\,\pi_{i}\mapsto
  u_i, \mbox{ for $i=0,1$} $

  \smallskip
 


\smallskip

  \textbf{Parallel Operator Permutations}

  $w(\Ecrom{a}{u}{v}) \mapsto \Ecrom{a}{wu}{wv}  $ if $a$ does not occur free in
    $w  $

$(\Ecrom{a}{u}{v}) \xi \mapsto \Ecrom{a}{u \xi}{v\xi} $ if $\xi$ is a
one-element stack and $a$ does not occur free in $\xi $

$w(\Ecrom{}{u}{v}) \mapsto \Ecrom{}{wu}{wv}  $   $\quad (\Ecrom{}{u}{v}) \xi \mapsto \Ecrom{}{u \xi}{v\xi} $ 
$ \quad \lambda x^{\scriptscriptstyle A}\,(\Ecrom{a}{u}{v}) \mapsto
 \Ecrom{a}{\lambda x^{\scriptscriptstyle A}\,u}{\lambda
    x^{\scriptscriptstyle A}\, v} \quad $ 
$ \langle u \parallel_a v,\, w\rangle \mapsto \langle u,
w\rangle \parallel_a \langle v, w\rangle \quad $ $\langle w, \,
u \parallel_a v\rangle \mapsto \langle w, u\rangle \parallel_a
\langle w, v\rangle$
$ \lambda x^{\scriptscriptstyle A}\,(u \p v) \mapsto  \lambda
x^{\scriptscriptstyle A}\,u \p \lambda x^{\scriptscriptstyle A}\, v $
$\quad \langle u \parallel v,\, w\rangle \mapsto \langle u,
w\rangle \parallel \langle v, w\rangle \quad $ $\langle w, \,
u \parallel v\rangle \mapsto \langle w, u\rangle \parallel
\langle w, v\rangle$

  $(u\parallel_{a} v)\parallel_{b} w \mapsto (u\parallel_{b}
w)\parallel_{a} (v\parallel_{b} w)$ if the communication complexity of
$b$ is greater than $0$

  $w \parallel_{b} (u\parallel_{a} v) \mapsto (w\parallel_{b}
u)\parallel_{a} (w\parallel_{b} v) $ if the communication complexity
of $b$ is greater than $0$

 $(u\parallel v)\parallel_{b} w \mapsto (u\parallel_{b}
w)\parallel (v\parallel_{b} w)$ if the communication complexity of
$b$ is greater than $0$

  $w \parallel_{b} (u\parallel v) \mapsto (w\parallel_{b}
u)\parallel (w\parallel_{b} v) $ if the communication complexity
of $b$ is greater than $0$

  \smallskip

  \textbf{Communication Reductions}
\end{center}
\textbf{Basic Cross Reductions}
$\qquad\qquad\qquad\qquad\qquad  \mathcal{C}[\send{a}\, u]\parallel_{a} \mathcal{D} \ \mapsto \
\mathcal{D}[ u /a ] $

 where $\send{a} : \non A, a : A $, $\mathcal{C}[\;]$ is a
  context; 
  the free variables of $u$ are also free
in $\mathcal{C}[\send{a}\, u]$; 
$\send{a}$ does not occur in $u$ ; and the communication complexity of
$a$ is greater than $0$.


  \textbf{Cross Reductions}  
    $\qquad\qquad u\parallel_{a}v \mapsto u$ if $a$ does not occur in $u$ and
    $u\parallel_{a}v \mapsto v$ if $a$ does not occur in $v$
    \[(
 \dots \p {\mathcal C}[\send{a} \, u] \p \dots ) \p_{a} \mathcal{D} \mapsto \ (
\mathcal{C}[\send{b} \, \lan \sq{y} \ran ] \parallel_{a} \mathcal{D} )
\, \parallel_{b}\, \mathcal{D}[ u^{b / \sq{y} } /a ]\]

where  $\send{a} : \non A, a : A$; $(
 \dots \p {\mathcal C}[\send{a} \, u] \p \dots )$  is a
 normal simple parallel term;
$ \sq{y}$ is the  non-empty sequence of free
variables of $u$ bound in $\mathcal{C}[\send{a}\,u]$; $B$ is the
conjunction of the types of the variables in $\sq{y}$ and $b / \sq{y}
$ is a multiple substitution of these variables;  $a$ is
rightmost in ${\mathcal C}[\send{a} \, u]$; $b$ is fresh; $\send{b} : \non B, b : B$; the
communication complexity of $a$ is greater than $0$.  

\smallskip

\end{footnotesize}
\hrule
\caption{Reduction Rules for $\lama$}\label{fig:red}\label{fig:redem}
\vspace{-10pt}
\end{figure*}

Unrestricted cross reductions do not always
terminate. 
Consider, for example, the following loop \begin{equation} \begin{footnotesize}
\label{examplecomplexity} \lambda y^{B}\, a^{\non B}\, y \p_{a}
x^{B \impl \non B} \, a^{B} \;\mapsto\; ( \lambda y\, b\, y \p_{a} x\,
a) \p _b x\, b \; \mapsto\; \lambda y\, b\, y \p _b x\,
b \end{footnotesize} \end{equation}
To avoid such situations we need conditions on the application 
of cross reductions. As shown below, our conditions are based on
the complexity of the  channel $a$ of a term $u\parallel_{a} v$, and are 
determined using logic.  We consider the type $B$ such that
$a$ occurs with type $\NON B$ in $u$ and thus with type $B$ in $v$,
the type $A$ of the term $u\parallel_{a} v$, and the types of its free
variables $x_{1}^{A_{1}}, \ldots, x_{n}^{A_{n}}$. The Subformula
Property tells us that, no matter what our notion of computation will
turn out to be, when the computation is done, no object whose type is
more complex than the types of the inputs and the output should
appear. If the prime factors of the types $B$ are not subformulas of
$A_{1}, \ldots, A_{n}, A$, then these prime factors should be taken
into account in the complexity measure we are looking for. This leads
to the following definition.

\begin{definition}[Communication Complexity]\label{def:comcompbroad} Let $u \p_a v : A$ a proof
term with free variables $x_{1}^{A_{1}}, \ldots, x_{n}^{A_{n}}$.
Assume that $\send{a}:\NON B$ occurs in $u$ and  $a:B$ occurs in $v$.
\begin{itemize}
\item $B$ is the \textbf{communication kind} of $a$.
\item The \textbf{communication complexity} of $a$ is the maximum
among $0$ and the number of symbols of the prime factors of $B$ that
are neither proper subformulas of $A$ nor strong subformulas of any $A_{1}, \mydots, A_{n}$.
\end{itemize}
\end{definition}


To fire a cross reduction for $u\p_{a} v$ we require that the
communication complexity of $a$ is greater than $0$.  As this is a
warning that the Subformula Property does not hold, we are using a
logical property as a \emph{computational criterion} for determining
when a computation should start and stop.  To see it at work, consider
again the term $\lambda y^{B}\, a^{\non B} \, y \p_{a} x^{B\impl \non
B}\, a^{B} : \non B$ in the reduction (\ref{examplecomplexity}). Since
all prime factors of the communication kind $B$ of $a$ are proper
subformulas of the type $\non B$ of the term, the communication
complexity of $a$ is $0$ and the cross reduction is not fired, thus
avoiding the loop in (\ref{examplecomplexity}).



Finally, we recall the notion of stack~\cite{Krivine}:
a series of operations
and arguments.

\begin{definition}[Stack]\label{definitionstack}
A \textbf{stack} is a, possibly empty, sequence \mbox{$\sigma =
\sigma_{1}\sigma_{2}\ldots \sigma_{n} $} such that for every $ 1\leq
i\leq n$, exactly one of the following holds: $\sigma_{i}=t$,
with $t$ proof term or $\sigma_{i}=\pi_{j}$ with $j\in\{0,1\}$, 
or $\exfalso_{P}$ for some
atom $P$.  We will denote the \emph{empty sequence} with $\epsilon$
and with $\xi, \xi', \ldots$ the stacks of length $1$. If $t$ is a
proof term, $t\, \sigma$ denotes the term $(((t\,
\sigma_{1})\,\sigma_{2})\ldots \sigma_{n})$.
 \end{definition}



We show now that the reductions of the calculus are sound proof
transformations.
\begin{theorem}[Subject Reduction]\label{subjectred}
If $t : A$ and $t \mapsto u$, then $u : A$ and all the free variables of $u$ appear among those of $t$.
\end{theorem}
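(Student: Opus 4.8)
The plan is to proceed by induction on the derivation that $t \mapsto u$, following the standard pattern for subject reduction proofs but organised by the kind of reduction rule applied at the root. Concretely, I would first establish the statement for the case where $t$ itself is the redex (a "head" reduction), and then handle the closure cases — where the reduction takes place inside a subterm occurring in an application, a pair, a projection, a $\lambda$-abstraction, or one of the two arguments of a $\parallel$ — by a routine appeal to the induction hypothesis together with the typing rules of Table~\ref{tab:typing}. The free-variable part of the statement (that $\mathrm{FV}(u) \subseteq \mathrm{FV}(t)$) will be carried along in parallel with the typing part, since several of the communication reductions genuinely discard or rename variables and one must check nothing new escapes.

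For the head-reduction cases I would go rule by rule through Figure~\ref{fig:redem}. The intuitionistic reductions ($\beta$ and projection) and the parallel-operator permutations are the familiar ones: each is justified by inverting the relevant typing rules, permuting the corresponding natural-deduction inferences, and using the substitution lemma for $\lambda$-variables (a statement one should either cite or record as a preliminary: if $\Gamma, x^A : A \vdash u : B$ and $\Gamma \vdash t : A$ then $\Gamma \vdash u[t/x^A] : B$, with $\mathrm{FV}(u[t/x]) \subseteq \mathrm{FV}(u)\setminus\{x\} \cup \mathrm{FV}(t)$). The permutations between $\parallel$ operators, e.g.\ $(u\parallel_a v)\parallel_b w \mapsto (u\parallel_b w)\parallel_a (v\parallel_b w)$, require checking that the side conditions on the channel variables $a^{\neg A}, a^A$ (resp.\ $b$) in Table~\ref{tab:typing} are preserved, which is immediate since duplicating $w$ does not change which occurrences of $a$ and $b$ carry which type; the type of the whole term is unchanged because both sides are derivations of the same formula $B$ from the same hypotheses.

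The genuinely interesting cases are the communication reductions, and I expect the cross reduction to be the main obstacle. For the basic cross reduction $\mathcal{C}[\send{a}\,u]\parallel_a \mathcal{D} \mapsto \mathcal{D}[u/a]$: from the $(\emiddle)$ typing rule the left branch has type $B$ with $a^{\neg A}$ discharged and the right branch has type $B$ with $a^A$ discharged; the occurrence $\send{a}\,u$ forces $u : A$, and by the side condition $\mathrm{FV}(u) \subseteq \mathrm{FV}(\mathcal{C}[\send{a}\,u])$, so substituting $u$ for the variable $a^A$ in $\mathcal{D}$ is a legal substitution by the substitution lemma, yielding $\mathcal{D}[u/a] : B$ with no free variables beyond those already present. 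For the full cross reduction one must verify that the new channel $b$ is correctly typed: $\sq{y}$ are the free variables of $u$ bound in $\mathcal{C}[\send{a}\,u]$, $B$ is the conjunction of their types, so $\langle \sq y\rangle : B$ and hence $\send{b}\,\langle\sq y\rangle$ is well typed in $\mathcal{C}$ wherever $\send{a}\,u : A$ stood (note $\send{b} : \neg B$ is applied to an object of type $B$ giving $\bot$ — one has to check the ex-falso/context structure around the hole makes this typecheck, exactly as in the $\lamg$ proof this adapts). On the other side, $u^{b/\sq y}$ replaces each $y_i$ by $b\,\proj_i$ where $b : B$, so $u^{b/\sq y} : A$ still and its free variables are now $(\mathrm{FV}(u)\setminus\{\sq y\}) \cup \{b\}$; placing it under $\parallel_b$ binds $b$. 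The outer term $(\mathcal{C}[\send b\langle\sq y\rangle]\parallel_a\mathcal{D})\parallel_b \mathcal{D}[u^{b/\sq y}/a]$ then has type $B$, and one checks that every free variable on the right-hand side was free on the left: $\sq y$ are still free in $\mathcal{C}[\send b\langle\sq y\rangle]$ (they were free in $\mathcal{C}[\send a u]$), $b$ is bound, and nothing else is introduced. The bookkeeping around which occurrences of which variable get which type under the $(\emiddle)$ side condition, and the precise placement of the hole relative to the $\bot$-typed subterm, is where the argument is delicate; I would lean on the corresponding lemma and proof structure from~\cite{lics2017} and only spell out the points where the simpler communication mechanism of $\lamem$ differs.
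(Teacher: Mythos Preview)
Your approach is essentially the paper's: dismiss the intuitionistic and permutation cases as standard, and focus the work on the typing and free-variable checks for the (basic and full) cross reductions, using the substitution lemma and the types of $\langle\sq{y}\rangle$ and $u^{b/\sq{y}}$.

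Two small slips in your free-variable bookkeeping for the full cross reduction. First, by definition $\sq{y}$ is the sequence of variables free in $u$ but \emph{bound} in $\mathcal{C}[\send{a}\,u]$, not free there; hence in $\mathcal{C}[\send{b}\langle\sq{y}\rangle]$ they remain bound by the very same binders in $\mathcal{C}$ and contribute no free variables --- your parenthetical ``they were free in $\mathcal{C}[\send{a}\,u]$'' has it backwards. Second, you do not invoke the side condition that $a$ is rightmost in $\mathcal{C}[\send{a}\,u]$: this is precisely what guarantees that $\send{a}$ does not occur in $u$, and hence that $\send{a}$ does not escape into $\mathcal{D}[u^{b/\sq{y}}/a]$, which now sits \emph{outside} the scope of $\parallel_a$. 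The paper's proof singles out exactly this point as the key observation for the free-variable clause.
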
 
\begin{proof} 

It is enough to prove the claim for cross reductions.
 The proof that the intuitionistic
reductions and the permutation rules preserve the type is completely
standard, see e.g.~\cite{Girard}. Basic cross reductions require straightforward considerations as well.
Suppose that \[ (\dots \p {\mathcal C}[\send{a} \, v] \p \dots) \p_{a}
\mathcal{D} \mapsto ( \mathcal{C}[\send{b} \, \lan \sq{y} \ran
] \parallel_{a} \mathcal{D} ) \, \parallel_{b}\, \mathcal{D}[ v^{\, b /
\sq{y} } /a ] \] Since $ \langle \sq{y} \rangle: B $, then ${b}^{\lnot
B} \lan \sq{y}\ran : \FAL$ and ${\mathcal C}[\send{b} \lan \sq{y}\ran]
$ can be assigned the correct type. Since the types of $ v ^{\, b / 
\sq{y}}$ and $ a$ are the same, the term ${\mathcal D} [v ^{\, b /
 \sq{y}} / a ]$ is correctly defined. Finally, since
$\send{a}$ is rightmost in $ {\mathcal C} [\send{a} v]$ and $\sq{y}$
is the sequence of the free variables of $t$ which are bound in $
{\mathcal C} [\send{a} v]$, by Def.~\ref{defimultsubst}, all free
variables of $v ^{\, b /  \sq{y}}$ in ${\mathcal D} [v ^{\, b / 
\sq{y}} / a ]$ are also free in $ {\mathcal C} [\send{a} v]$.
Hence, no new free variable is created during the reduction.
\end{proof}


\section{The Normalization Theorem}\label{sec:norm}

We prove that every proof term of
$\lama$ reduces in a finite number of steps to a normal form. By
Subject Reduction this implies the normalization for $\NJ + (\emiddle)$ proofs. 
%
%
%
%
%
%
%
%
%
%
%
The normalization proof is based on the method introduced
in~\cite{lics2017}, adapted here to $\lama$, refined and
completed by the new Lemma~\ref{lem:uebermensch} which fills a
small gap in the proof for $\lamg$.
The idea behind the normalization strategy is to employ a suitable
complexity measure for terms $u\p_a v$ and, each time a
reduction has to be performed, to choose the term of maximal
complexity. Since cross reductions can be applied as long as there is
a violation of the Subformula Property, the natural approach is to
define the complexity measure as a function of some fixed set of
formulas, representing the formulas that can be safely used without
violating the Subformula Property.
We start by defining  parallel form and normal form. 
\begin{definition}[Parallel Form]\label{definition-parallel-form} 
A \textbf{parallel form} is defined inductively as follows: a simply
typed $\lambda$-term is a parallel form; if $u$ and $v$ are parallel
forms, then both $u\p_{a} v$  and  $u\p v$ are parallel forms.
\end{definition}

 
 \begin{definition}[Normal Forms and Normalizable Terms]\mbox{}
 \begin{itemize}
\item  A \textbf{redex} is a term $u$ such that $u\mapsto v$ for some $v$ and basic reduction of Figure~\ref{fig:red}. A term $t$ is called a \textbf{normal form} or, simply, \textbf{normal}, if there is no $t'$ such that $t\mapsto t'$. We define $\nf$ to be the set of normal $\lama$-terms. 
\item  
A sequence, finite or infinite, of proof terms
$u_1,u_2,\ldots,u_n,\ldots$ is said to be a reduction of $t$, if
$t=u_1$, and for all  $i$, $u_i \mapsto
u_{i+1}$.
 A proof term $u$ of $\lama$ is  \textbf{normalizable} if there is a finite reduction of $u$ whose last term is a normal form. 
\end{itemize}
\end{definition}

First established in \cite{lics2017}, the following property of simply typed
$\lambda$-terms is crucial for our normalization proof. It ensures that every bound hypothesis appearing
in a normal intuitionistic proof is a strong subformula of one of the
premises or a proper subformula of the conclusion. This
implies that the types of the new channels generated by cross reductions
are smaller than the local premises.

\begin{proposition}\label{prop:boundhyp}
Suppose that $t\in\nf$ is a simply typed $\lambda$-term,
$x_{1}^{A_{1}}, \ldots, x_{n}^{A_{n}}\vdash t: A$
and $z: B$ is a variable occurring bound in $t$.  Then one of the following holds:
(1) $B$ is a proper subformula of a prime factor of $A$ or
(2) $B$ is a strong subformula of one among $A_{1},\ldots, A_{n}$.
\end{proposition}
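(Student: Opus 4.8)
The plan is to prove Proposition~\ref{prop:boundhyp} by induction on the structure of the normal simply typed $\lambda$-term $t$, following the standard inductive characterization of normal forms: either $t$ is an \emph{introduction} ($\lambda$-abstraction or pair) whose immediate subterms are normal, or $t$ is a \emph{neutral} term of the form $x\,\sigma$ where $x$ is a free variable and $\sigma$ is a stack of normal terms, projections and $\exfalso$'s. First I would isolate the key invariant I need: for a neutral normal term $x^{C}\,\sigma : A$ with $x^{C}$ free, every argument term occurring anywhere in $\sigma$ has a type that is a proper subformula of a prime factor of $C$ (this is the usual ``every subterm of a neutral normal term is typed by a subformula of the head variable's type, and moreover by a proper subformula of an arrow-prime-factor of it'' observation, which uses that prime formulas here are atoms or arrows). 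Together with the fact that $C$ itself is among the $A_{i}$, such argument types and all their subformulas are strong subformulas of some $A_{i}$ — because a proper subformula of a prime factor of $A_{i}$ is exactly what the definition of strong subformula asks for, via Proposition~\ref{propositionstrongsubf}.

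The induction proceeds as follows. If $t = x^{C}\,\sigma$ is neutral, then a bound variable $z:B$ occurs inside one of the argument terms $\sigma_{j}$ of $\sigma$ (projections and $\exfalso$ carry no bound variables). That argument term $\sigma_{j}$ has some type $D$ which, by the invariant above, is a proper subformula of a prime factor of $C$, hence $D$ is a strong subformula of $C$, and $C$ is one of the $A_{i}$; applying the induction hypothesis to $\sigma_{j}$ (with its own typing context, which extends $\Gamma$ by the $\lambda$-variables in whose scope $\sigma_{j}$ sits) we get that $B$ is either a proper subformula of a prime factor of $D$ — and then, since $D$ is a proper subformula of a prime factor of $C = A_{i}$, transitivity of the subformula relation gives that $B$ is a strong subformula of $A_{i}$ — or $B$ is a strong subformula of one of the context types of $\sigma_{j}$; those are either the original $A_{k}$ (done) or the types of enclosing $\lambda$-variables, and here I use that a $\lambda$-variable introduced higher up in $t$ has a type that is itself a proper subformula of a prime factor of $C$ (same invariant), so again transitivity closes the case. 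If instead $t = \lambda x^{E} u$ with $t : E \to A'$, then either $z = x^{E}$, in which case $E$ is a prime factor of $E \to A'$ — wait, rather $E$ is a proper subformula of the prime formula $E\to A'$, hence $B = E$ satisfies clause (1) with $A = E\to A'$; or $z$ occurs bound in $u$, and the induction hypothesis for $u : A'$ under context $\Gamma, x^{E}:E$ gives $B$ a proper subformula of a prime factor of $A'$ (hence of a prime factor of $A = E\to A'$ by Proposition~\ref{propositionstrongsubf}, clause~(1)), or $B$ a strong subformula of some $A_{i}$ (done), or $B$ a strong subformula of $E$; but $E$ is a proper subformula of the prime formula $E\to A'$, so a strong subformula of $E$ — being a proper subformula of an arrow-prime-subformula of $E$ — is in particular a proper subformula of an arrow-prime-subformula of $E\to A'$, i.e.\ a strong subformula of $A$, which is even stronger than clause~(1); in any case clause (1) or a fortiori a containment in the conclusion is satisfied. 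The pairing case $t = \langle u_{0}, u_{1}\rangle : A_{0}\wedge A_{1}$ is similar and easier: a bound $z:B$ occurs in some $u_{i} : A_{i}$, and since $A_{i}$ is a prime factor of $A$, the induction hypothesis for $u_{i}$ directly yields clause (1) or (2) for $A$.

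The heart of the argument — and the step I expect to be the main obstacle — is setting up the neutral-term invariant cleanly and pushing it through the transitivity bookkeeping: I need a precise sublemma stating that if $x^{C}\,\sigma$ is a normal neutral term and $s$ is any subterm of it that is an argument (or a sub-argument) in the spine, then the type of $s$ is a proper subformula of a prime factor of $C$, \emph{and} likewise any $\lambda$-variable binding occurring along the way has such a type. This is where the restriction ``prime = atomic or arrow'' (noted after Definition~\ref{definitionstrongsubf}) is essential: peeling one argument off a head of type $C_{1}\to C_{2}$ moves us from $C$ to a prime factor of $C_{1}$ or to $C_{2}$, each a proper subformula of the arrow-prime-factor we started from, so iterating stays within proper subformulas of \emph{some} arrow proper subformula of $C$. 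Once this sublemma is in place, the main induction is the routine case analysis sketched above, with Proposition~\ref{propositionstrongsubf} invoked to translate between ``proper subformula of a prime factor'' and ``strong subformula'' and transitivity of the (proper) subformula relation doing the rest. A secondary subtlety is the handling of the context in the inductive calls — when we recurse into an argument of a neutral term we must enlarge $\Gamma$ by the $\lambda$-bound variables whose scope we have entered, and verify that these new context entries also have types that are proper subformulas of prime factors of $C$, so that ``strong subformula of a context type'' collapses back to ``strong subformula of some original $A_{i}$'' via transitivity; this is exactly what the sublemma's second clause provides.
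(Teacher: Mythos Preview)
Your overall plan --- structural induction on the normal simply typed term, splitting into the introduction cases ($\lambda$, pair) and the neutral case $x_i^{A_i}\,\xi_1\ldots\xi_m$ --- is exactly the paper's approach, and your treatment of the $\lambda$ and pairing cases is correct and matches the paper's.

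Where you diverge is in the neutral case, and there you are manufacturing a difficulty that is not present. The ``secondary subtlety'' you flag --- that recursing into an argument $\xi_j$ would require enlarging $\Gamma$ by enclosing $\lambda$-binders --- does not arise: the spine $\xi_1\ldots\xi_m$ of a neutral term binds no variables whatsoever, so each argument term $\xi_j$ lives in exactly the same context $x_1^{A_1},\ldots,x_n^{A_n}$ as $t$ itself. Consequently your proposed sublemma (about ``sub-arguments'' and $\lambda$-bindings ``along the way'') is unnecessary, and in the ``sub-argument'' form it is not even true in general. The paper's argument is direct: one notes that $\xi_j:T$ with $T$ a proper subformula of $A_i$, applies the induction hypothesis to $\xi_j$ in the \emph{same} context, and concludes. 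If the IH yields that $B$ is a proper subformula of a prime factor of $T$, then since that prime factor is itself a prime proper subformula of $A_i$, $B$ is a strong subformula of $A_i$ by definition; if the IH yields clause~(2) directly, one is done. No transitivity bookkeeping through enlarged contexts is needed.

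Two minor omissions: you do not treat the case $t=\efq{P}{u}$ (trivial: $\bot$ has no proper subformulas, so the IH on $u:\bot$ forces clause~(2)), and you do not isolate the base case $t=x_i$ (vacuous, since then no variable is bound). Otherwise your argument, stripped of the superfluous sublemma, \emph{is} the paper's proof.
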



As shown in~\cite{lics2017}, each hypothesis of a
normal intuitionistic proof is followed by an elimination rule, unless
the hypothesis  is $\bot$, subformula of the conclusion or proper subformula of
a premise.


 \begin{proposition}\label{prop:app}
 Let $t\in \nf$ be a simply typed $\lambda$-term and 
$x_{1}^{A_{1}}, \ldots, x_{n}^{A_{n}}, z^{B}\vdash t: A$
 One of the following holds:
 \begin{enumerate}

 \item Every occurrence of $z^{B}$ in $t$ is of the form $z^{B}\, \xi$ for some proof term or projection $\xi$.

 \item $B=\bot$ or $B$ is a subformula of  $A$ or a  proper subformula of one among $A_{1}, \ldots, A_{n}$.
 \end{enumerate}
 \end{proposition}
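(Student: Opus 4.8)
The plan is to prove the statement by structural induction on the normal simply typed $\lambda$-term $t$, using the well-known inductive characterization of such terms: $t$ is either an abstraction $\lambda y^{C}u$ with $u$ normal, or a pair $\langle u_{0},u_{1}\rangle$ with $u_{0},u_{1}$ normal, or a \emph{neutral} term $h\,\sigma$, where $h$ is a variable -- necessarily free in $t$, hence one of $x_{1},\dots,x_{n},z$ -- and $\sigma=\sigma_{1}\cdots\sigma_{m}$ is a stack with the property that if some $\sigma_{j}$ is of the form $\exfalso_{Q}$ then $j=m$ (because $\exfalso_{Q}$ produces a non-$\bot$ atomic type, which admits no elimination). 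What I would actually prove by induction is the equivalent formulation ``if alternative~(1) fails for $t$, then $B=\bot$, or $B$ is a subformula of $A$, or $B$ is a proper subformula of some $A_{i}$''. Throughout I would use only that ``subformula of'' is reflexive and transitive and that composing ``proper subformula of'' with ``subformula of'' again yields ``proper subformula of''.

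For $t=\lambda y^{C}u:C\to A'$ I would apply the induction hypothesis to $u$ in the context extended with $y^{C}$ and conclusion $A'$: since $C$ and $A'$ are (proper, respectively) subformulas of $A=C\to A'$, a one-line case split on the two alternatives for $u$ transfers the conclusion to $t$. The case $t=\langle u_{0},u_{1}\rangle:A'\wedge A''$ is symmetric, applying the induction hypothesis to $u_{0}$ and $u_{1}$, whose types are subformulas of $A$. The heart of the proof is the neutral case $t=h\,\sigma$, and the key tool there is a \emph{spine computation}: the type of $h\,\sigma_{1}\cdots\sigma_{\ell}$ is always a subformula of the type $D$ of $h$, and whenever $\sigma_{j}$ is a proof term $u_{j}$ its type is the domain of the arrow type of $h\,\sigma_{1}\cdots\sigma_{j-1}$, hence a \emph{proper} subformula of $D$ (the $\exfalso$ discipline above guarantees no $\exfalso$ precedes such a $\sigma_{j}$, so the spine type really stays a subformula of $D$).

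Then I would argue as follows. If $h=x_{i}$, every occurrence of $z$ in $t$ sits inside some proof-term entry $u_{j}$ of $\sigma$; applying the induction hypothesis to each $u_{j}$, its alternative~(2) gives $B=\bot$, or $B$ a subformula of $\mathrm{type}(u_{j})$ -- hence a proper subformula of $A_{i}$ -- or $B$ a proper subformula of some $A_{k}$, and each case is alternative~(2) for $t$; otherwise alternative~(1) holds for all the $u_{j}$, hence for $t$. If $h=z$ and $m=0$ then $t=z$, so $B=A$ and alternative~(2) holds. If $h=z$ and $m\ge 1$, then $D=B$; the head occurrence $z\,\sigma_{1}$ is already of the shape $z\,\xi$ unless $\sigma_{1}=\exfalso_{Q}$, in which case $z:\bot$, so $B=\bot$ and we are done; when $\sigma_{1}$ is a proof term or a projection, I would run the induction hypothesis on the proof-term entries of $\sigma$, whose types are now \emph{proper} subformulas of $B$, so that the disjunct ``$B$ subformula of $\mathrm{type}(u_{j})$'' is self-contradictory and the surviving ones, ``$B=\bot$'' and ``$B$ proper subformula of some $A_{k}$'', both yield alternative~(2) for $t$, while if alternative~(1) holds for every $u_{j}$ then every occurrence of $z$ in $t$ is of the shape $z\,\xi$. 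I expect the only genuinely delicate point to be the bookkeeping of occurrences: one must keep the harmless head occurrence of $z$ apart from those buried in spine arguments and read off from the spine the precise subformula relation between $B$ and the argument types. No strong-subformula machinery and no appeal to Proposition~\ref{prop:boundhyp} are needed for this argument.
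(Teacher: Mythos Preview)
Your proposal is correct and follows essentially the same approach as the paper's proof: both argue by structural induction on the normal simply typed $\lambda$-term, with the substantive work in the head-variable case, where the argument types along the spine are proper subformulas of the head's type. The only organizational difference is that the paper splits the neutral case into three separate top-level cases ($t=\efq{P}{u}$, $t=x_i\,\xi_1\ldots\xi_m$, $t=z\,\xi_1\ldots\xi_m$) rather than treating them under a single spine analysis with $\exfalso$ forced to the last position; this is cosmetic.
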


\begin{proposition}[Parallel Form Property]
\label{prop:parallelform} 
If $t\in
\nf$ is a $\lama$-term, then it is in parallel form.
\end{proposition}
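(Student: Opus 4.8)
The plan is a structural induction on $t$. Two preliminary observations will be used throughout. First, the reduction relation $\mapsto$ is the contextual closure of the basic reductions of Figure~\ref{fig:redem}, so every immediate subterm of a term in $\nf$ is again in $\nf$; hence whenever $t$ is decomposed into its immediate subterms, those subterms are normal and the induction hypothesis applies to them. Second, by the inductive definition of parallel form (Definition~\ref{definition-parallel-form}), a parallel form that is \emph{not} a simply typed $\lambda$-term must have the shape $p_{1}\p_{a}p_{2}$ or $p_{1}\p p_{2}$: the base case of the definition contributes only $\p$-free terms, so in any parallel form every $\p$ and $\p_{a}$ sits above all the simply typed structure.

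I would then argue by cases on the outermost constructor of $t$. If $t$ is a variable, it is a simply typed $\lambda$-term, hence a parallel form. If $t = u\p_{a}v$ or $t = u\p v$, then $u,v\in\nf$, so by the induction hypothesis they are parallel forms, and $t$ is a parallel form by definition. In every remaining case --- $t = \lambda x^{A}u$, $t = \langle u,v\rangle$, $t = u\,\pi_{i}$, $t = \efq{P}{u}$, $t = uv$ --- the immediate subterms are normal and thus parallel forms by the induction hypothesis, and it suffices to show that each of them is in fact a simply typed $\lambda$-term, because then $t$ itself contains no $\p$ and is a simply typed $\lambda$-term. Suppose some immediate subterm is not simply typed. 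By the second observation it has a top-level $\p_{a}$ or $\p$, and then one of the parallel-operator permutations of Figure~\ref{fig:redem} rewrites $t$, contradicting $t\in\nf$: e.g.\ $\lambda x^{A}(u_{1}\p_{a}u_{2})$ reduces by the abstraction permutation; $\langle u_{1}\p u_{2},v\rangle$ reduces by the left-pairing permutation and $\langle u,v_{1}\p_{a}v_{2}\rangle$ by the right-pairing permutation (neither of which has a side condition); and $(u_{1}\p u_{2})\,\xi$ with $\xi\in\{\pi_{i},\exfalso_{P}\}$ reduces by the left-permutation for one-element stacks, whose side condition ``$a$ does not occur free in $\xi$'' is vacuous since these stacks have no free variables.

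The only case deserving an explicit word is application $t = uv$, since the buried parallel operator may sit in $u$ or in $v$. If $u = u_{1}\p_{a}u_{2}$ I apply $(\Ecrom{a}{u_1}{u_2})\,\xi\mapsto\Ecrom{a}{u_1\xi}{u_2\xi}$ with the one-element stack $\xi=v$; its side condition requires $a$ not free in $v$, which can always be arranged by $\alpha$-renaming the channel bound by this $\p_{a}$ to a fresh name, as permitted by the calculus' renaming conventions. Symmetrically, if $v = v_{1}\p_{a}v_{2}$ I apply $u(\Ecrom{a}{v_1}{v_2})\mapsto\Ecrom{a}{uv_1}{uv_2}$ after renaming so that $a$ is not free in $u$; and the $\p$-versions of both permutations carry no side condition. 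I do not anticipate a genuine obstacle: the permutation rules are engineered precisely so that every parallel operator commutes past every $\lambda$-calculus constructor, so normality forces all parallel operators to the outermost positions. The one point to treat with care is exactly this handling of the channel-freshness side conditions in the application case, which $\alpha$-conversion disposes of.
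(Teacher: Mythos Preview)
Your proposal is correct and follows essentially the same structural-induction argument as the paper: for each $\lambda$-calculus constructor you use the matching parallel-operator permutation to derive a contradiction with normality, and in the $\p_{a}$/$\p$ case you apply the induction hypothesis directly. You are in fact more explicit than the paper about the $\alpha$-renaming needed to discharge the freshness side conditions in the application case, which the paper leaves implicit.
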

\begin{proof}
      Easy structural induction on $t$ using the permutation
      reductions. 
\end{proof}

\begin{definition}[Complexity of Parallel
  Terms]\label{def:acomplex}
Let $\mathcal{A}$ be a finite set of formulas. The
$\mathcal{A}$-\textbf{complexity} of $u\p_a v$ 
is the  sequence $(c, d, l, o)$ of natural numbers, where:
\begin{enumerate}
\item   if the communication kind of $a$ is $C$, then $c$
is the maximum among $0$ and the number of symbols of the prime
factors of $C$ that are not subformulas of some
formula in $\mathcal{A}$;
\item $d$ is the number of occurrences of $\parallel_e$ and $\parallel$ in $u, v$ for
  any variable $e$;
\item $l$ is the sum of the maximal lengths of the intuitionistic
  reductions of $u,v$; 
\item $o$ is the number of occurrences of $a$ in $u,v$.
\end{enumerate} The \textbf{$\mathcal{A}$-communication-complexity}
of
$u \p_a v$ is $c$.
\end{definition}
We adapt the normalization algorithm of~\cite{lics2017} that
represents the constructive content of the proofs of
Prop.~\ref{propositionnormpar} and Thm.~\ref{thm:norm}. Essentially,
the master reduction strategy consists in iterating the basic
reduction relation $\succ$ defined in Def.~\ref{defredstrategy} below, whose goal is to permute the
smallest redex $u\p_a v$ of maximal complexity until $u, v$ are simple
parallel terms (see Def.~\ref{def:simpparterm}) then normalize them
and apply cross reductions.




\begin{definition}[Side Reduction Strategy]\label{defredstrategy}
Let $t: A$ be a term with free variables $x_{1}^{A_{1}},\ldots,
x_{n}^{A_{n}}$ and $\mathcal{A}$ be the set of the proper subformulas
of $A$ and the strong subformulas of the formulas $A_{1}, \ldots,
A_{n}$.  Let $u \p_a v$ 
be the \emph{smallest subterm} of $t$, if
any, among those of \emph{maximal} $\mathcal{A}$-complexity and let
$(c, d, l, o)$ be its $\mathcal{A}$-complexity.  We write $t\succ
t'$ whenever $t'$ has been obtained from $t$ by applying to
$u\parallel_{a} v$:
\begin{enumerate}
\item if $d>0$, a permutation that move $\p_{a}$ inside $u$ or $v$,
  such as  $ u \p_a (v_1 \p_b v_2)
  \mapsto (u \p_a v_1) \p_b (u \p_a  v_2)$
\item if $d=0$ and $l>0$, intuitionistic reductions normalizing all terms $u_1 , \dots , u_m$;
\item if $d=l=0$ and $c>0$, a cross reduction possibly followed 
by applications of the cross reductions $w_1 \p_c w_2 \mapsto w_i$ for
$i \in \{1, 2\} $ to the whole term;
\item if
$d=l=c=0$, a cross reduction  $u \p_a v \mapsto u$ or $u \p_a v \mapsto v $.
\end{enumerate}
\end{definition}

\begin{definition}[Master Reduction
Strategy]\label{defi-mastredstrategy} We define a normalization
algorithm $\nor{N}(t)$ which for any term $t$ outputs a term $t'$ such that
$t\mapsto^{*} t'$.  Let the free variables of $t$ be
$x_{1}^{A_{1}},\ldots, x_{n}^{A_{n}}$ and $\mathcal{A}$ be the set
of proper subformulas of $A$ and strong subformulas of $A_{1}, \ldots,
A_{n}$. The algorithm behaves as follows.
\begin{enumerate}
\item If $t$ is not in parallel form, then by permutation
reductions $t$ is reduced to a $t'$ which is in parallel form and
$\nor{N}(t')$ is recursively executed.
  
\item \label{dot:simply} If $t$ is a simply typed $\lambda$-term, it
is normalized and returned. If $t= u_1 \p_a u_2 $ is not
a redex, then let $\nor{N}(u_i)=u_i '$ for $1 \leq i \leq 2$. If
$u_1 '\p_a u_2 '$ is normal, it is returned. Otherwise,
$\nor{N}(u_1 '\p_a u_2 ')$ is recursively executed.

\item If $t$ is a redex, 
we select the \emph{smallest}
subterm $w$ of $t$ having maximal
$\mathcal{A}$-communication-complexity $r$. A sequence of terms
$w\succ w_{1}\succ w_{2}\succ \ldots \succ w_{n} $ is produced such
that $w_{n}$ has $\mathcal{A}$-communication-complexity strictly
smaller than $r$. We replace  $w$ by  $w_n$ in $t$, obtain $t'$, and recursively execute $\nor{N}(t')$.

\end{enumerate}  
\noindent We observe that in the step~\ref{dot:simply} of the
algorithm $\nor{N}$, by construction $u_1 \p_a u_2$
is not a redex.  After $u_1, u_2$ are normalized
respectively to $u_1', u_2'$, it can still be the case that
$u_1 ' \p_a u_2 '$ is not normal, because some free
variables of $u_1, u_2$ may disappear during the
normalization, causing a new violation of the Subformula Property that
transforms $u_1 ' \p_a u_2 '$ into a redex, even
though  $u_1 \p_a u_2$ was not.
 \end{definition}

The first step of the normalization consists in reducing the term in parallel form.

\begin{proposition}\label{propositionnormpar}
Let $t: A$ be any term. Then $t\mapsto^{*} t'$, where $t'$ is a parallel form. 
\end{proposition}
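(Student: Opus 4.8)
The plan is to prove Proposition~\ref{propositionnormpar} by structural induction on the term $t$, showing that the permutation reductions of Figure~\ref{fig:redem} suffice to push every occurrence of $\parallel_a$ and $\parallel$ to the outside, so that the result is built up from simply typed $\lambda$-terms using only the two parallel constructors — i.e.\ a parallel form in the sense of Definition~\ref{definition-parallel-form}. The key observation is that the permutation rules are precisely designed so that whenever a parallel operator sits underneath any other term constructor ($\lambda$-abstraction, application, pairing, projection, ex-falso), there is a rule moving the parallel operator outward, and whenever a parallel operator sits underneath another parallel operator with a channel of positive communication complexity, the parallel-parallel permutations apply.

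First I would fix a suitable well-founded measure on terms and argue that each permutation step strictly decreases it; a natural choice is something like the multiset of depths (distance from the root) of the occurrences of $\parallel_a$ and $\parallel$ in $t$, ordered by the multiset extension of $<$ on $\mathbb{N}$, since every permutation rule replaces an occurrence of a parallel operator at depth $k+1$ by one or two occurrences at depth $k$ (possibly duplicating material below, but the duplicated parallel operators, if any, were already strictly deeper, and after the step they are still strictly above where the moved operator now sits — this needs a careful but routine check). Using this measure, it suffices to show: if $t$ is \emph{not} in parallel form, then some permutation reduction applies. I would prove this by induction on $t$: if $t$ is a variable it is in parallel form; if $t = \lambda x^A\,u$, $t = u\,\sigma$ for a stack element $\sigma$, $t = \langle u, v\rangle$, or a projection, and the immediate subterm in question is already a parallel form $w_1 \parallel_{(a)} w_2$, then the corresponding permutation of Figure~\ref{fig:redem} fires; otherwise the subterm is not in parallel form and we recurse. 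The one genuinely non-trivial case is $t = u \parallel_a v$ (or $u \parallel v$) where $u, v$ are already parallel forms: then $t$ is in parallel form \emph{unless} the condition for the parallel-parallel permutations is met, namely $u$ or $v$ has the shape $w_1 \parallel_b w_2$ with the communication complexity of $b$ greater than $0$ — but in that case exactly the rule $u \parallel_a (v_1 \parallel_b v_2) \mapsto (u \parallel_a v_1) \parallel_b (u \parallel_a v_2)$ (or its symmetric/unlabelled variants) applies.

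The main obstacle I anticipate is the interaction between the duplication caused by the parallel-parallel permutations and the termination measure: when $w \parallel_b (u \parallel_a v) \mapsto (w \parallel_b u) \parallel_a (w \parallel_b v)$, the subterm $w$ — which may itself contain parallel operators — gets copied, so the naive count of parallel operators can increase. I would handle this by noting that the relevant measure must be the multiset of \emph{depths} rather than a raw count: the copied occurrences inside $w$ move from depth-in-$w$ plus the depth of $w$, to the same depth-in-$w$ plus a strictly smaller prefix depth (since $w$ is now directly under a parallel operator that used to be two levels up), so every individual parallel occurrence strictly decreases in depth, and the multiset strictly decreases. A secondary subtlety is that the permutations for $\lambda$ and for applications with one-element stacks have side conditions ($a$ not free in $w$, resp.\ in $\xi$) — but these can always be met after $\alpha$-renaming the bound channel $a$, which does not affect the measure; I would dispatch this with the standard remark on renaming conventions already adopted in the paper. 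Beyond these points the argument is the ``easy structural induction'' the analogous Proposition~\ref{prop:parallelform} is proved by, and I would keep the write-up correspondingly brief.
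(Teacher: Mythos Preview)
Your proposal contains a real misreading of Definition~\ref{definition-parallel-form} and, as a consequence, a faulty termination argument.

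First, the case you call ``genuinely non-trivial'' is in fact trivial. By Definition~\ref{definition-parallel-form}, if $u$ and $v$ are parallel forms then $u\parallel_{a} v$ and $u\parallel v$ are parallel forms, full stop. There is no ``unless'' clause involving communication complexity; the parallel--parallel permutations $(u\parallel_{a} v)\parallel_{b} w \mapsto \ldots$ are \emph{never} needed to reach parallel form. They exist for a different purpose (to expose cross-reduction redexes during normalization, cf.\ Definition~\ref{defredstrategy}), and indeed they carry a side condition on the communication complexity of $b$ that may well fail, so you cannot rely on them anyway.

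Second, your multiset-of-depths measure does not decrease along permutation steps that duplicate a subterm containing parallel operators. Take $w(u\parallel_{a} v)\mapsto wu\parallel_{a} wv$ with $w=s_{1}\parallel_{c} s_{2}$ and $u,v,s_{1},s_{2}$ simply typed. Before the step the parallel operators $\parallel_{c}$ and $\parallel_{a}$ both sit at depth~$1$, giving the multiset $\{1,1\}$. After the step $\parallel_{a}$ is at depth~$0$ but there are now \emph{two} copies of $\parallel_{c}$, each at depth~$2$, giving $\{0,2,2\}$, which is strictly larger in the Dershowitz--Manna order. The same phenomenon occurs for the parallel--parallel permutation you discuss: in $w\parallel_{b}(u\parallel_{a} v)\mapsto (w\parallel_{b} u)\parallel_{a}(w\parallel_{b} v)$ the subterm $w$ moves from depth~$1$ to depth~$2$, not to a smaller depth as you claim.

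The paper's argument avoids all of this. It is a plain structural induction on $t$: for $t=u\parallel_{a} v$ apply the induction hypothesis to $u$ and $v$ and use the definition of parallel form directly; for any other outermost constructor (abstraction, application, pairing, projection, $\exfalso$), first apply the induction hypothesis to the immediate subterms to obtain parallel forms, and then use only the permutations that commute that single outermost constructor past the finitely many $\parallel$-nodes of those parallel forms. Since the subterms are already parallel forms with a fixed finite number of $\parallel$-nodes, this distribution terminates in a number of steps bounded by that number, with simply typed leaves at the end --- no global termination measure is needed.
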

\begin{proof}
Easy structural induction on $t$. 
\end{proof}

We now prove that any term in parallel form can be normalized using  the algorithm $\nor{N}$.
\begin{lemma}\label{lem:uebermensch} Let $t: A$ be a term in parallel
form which is not simply typed and $\mathcal{A}$
contain all proper subformulas of $A$ and be closed under
subformulas. Assume that $r > 0$ is the maximum
$\mathcal{A}$-communication-complexity of the subterms of $t$. Assume
that the free variables $x_{1}^{A_{1}},\ldots, x_{n}^{A_{n}}$ of $t$
are such that for every ${i}$, either each strong subformula of
$A_{i}$ is in $ \mathcal{A}$, or each proper prime subformula of
$A_{i}$ is in $ \mathcal{A}$ or has at most $r$ symbols.  Suppose
moreover that no subterm $u_1 \p_a u_2 $ with
$\mathcal{A}$-communication-complexity $r$ contains a subterm of the
same $\mathcal{A}$-communication-complexity. Then there exists $t'$
such that $t \succ^* t' $ and the maximal among the
$\mathcal{A}$-communication-complexity of the subterms of $t'$ is
strictly smaller than $r$.
\end{lemma}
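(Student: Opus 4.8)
The plan is to show that iterating the side reduction strategy $\succ$ of Definition~\ref{defredstrategy} necessarily brings the maximal $\mathcal{A}$-communication-complexity below $r$. Call a subterm of the form $w_{1}\p_{c}w_{2}$ \emph{$r$-critical} when its $\mathcal{A}$-communication-complexity equals $r$. Two preliminary remarks. Since $r$ is the maximum and the $\mathcal{A}$-complexity is a tuple $(c,d,l,o)$ ordered lexicographically with $c$ first, as long as some $r$-critical subterm is present the smallest subterm of maximal $\mathcal{A}$-complexity has $c=r$, so one of the clauses (1)--(3) of Definition~\ref{defredstrategy} is enabled and $\succ$ can be fired. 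And since $t$ is in parallel form, every subterm $w_{1}\p_{c}w_{2}$ is a node of its parallel-form skeleton; hence all such subterms have the same type $A$ as $t$, and the free variables of any simply typed leaf below them lie among $x_{1}^{A_{1}},\dots,x_{n}^{A_{n}}$ and the endpoints $c^{\lnot D},c^{D}$ of the channels $c$ above the leaf, each of $\mathcal{A}$-communication-complexity $\le r$. As a well-founded measure I take the finite multiset $\mathcal{M}(t)$ of the triples $(d,l,o)$ of all $r$-critical subterms of $t$, ordered by the multiset extension of the lexicographic order on triples of natural numbers. It then suffices to prove that each step $t\succ t'$ strictly decreases $\mathcal{M}$, keeps $t'$ a parallel form of maximal $\mathcal{A}$-communication-complexity $\le r$, and preserves the hypothesis ``no $r$-critical subterm contains an $r$-critical subterm''. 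The iteration then has to stop, and it can only stop at a $t'$ with no $r$-critical subterm, i.e.\ with maximal $\mathcal{A}$-communication-complexity strictly below $r$.

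So fix the step applied to the chosen subterm $u\p_{a}v$, which is $r$-critical; by the no-nesting hypothesis neither $u$ nor $v$ (nor any leaf below them) contains an $r$-critical subterm. In Clause~(1) ($d>0$) a permutation $u\p_{a}(v_{1}\p_{b}v_{2})\mapsto(u\p_{a}v_{1})\p_{b}(u\p_{a}v_{2})$ — or its unlabelled variant, or the symmetric one when the top parallel is in $u$, all licensed because the displaced channel $a$ has complexity $r>0$ — replaces the triple $(d,l,o)$ by the triples of the $r$-critical $u\p_{a}v_{i}$, which have strictly fewer parallels, hence are strictly smaller; the new outer parallel has the same $\mathcal{A}$-communication-complexity as $v_{1}\p_{b}v_{2}$, which is $<r$ since $v_{1}\p_{b}v_{2}$ is a proper subterm of the $r$-critical $u\p_{a}v$ and hence not $r$-critical; every other $r$-critical subterm is untouched, so $\mathcal{M}$ strictly decreases and no-nesting is preserved. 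In Clause~(2) ($d=0<l$) normalizing the simply typed leaves $u,v$ leaves $c,d$ fixed, strictly decreases $l$, and introduces no parallel, so $\mathcal{M}$ decreases.

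Clause~(3) ($d=l=0<c=r$) is the core. Here $u$ is a \emph{normal} simply typed term in which $\send{a}:\lnot E$ occurs ($E$ the communication kind; it occurs because $c=r>0$). A cross reduction is applicable: by Proposition~\ref{prop:app}, either every occurrence of $\send{a}$ in $u$ is applied, or $\lnot E$ is $\bot$, a subformula of $A$, or a proper subformula of a free-variable type of $u$; but each of the last three alternatives makes the prime factors of $E$ either subformulas of a formula in $\mathcal{A}$ — using closure of $\mathcal{A}$ under subformulas, the fact that the leaf has type $A$, the dichotomy in the hypothesis on the relevant $A_{i}$, and the bound on channels above when the free variable is a channel endpoint — or of size $<r$; either way $c<r$, a contradiction. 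Hence $u=\mathcal{C}[\send{a}\,s]$ with $\send{a}\,s$ the rightmost occurrence of $a$, and we fire the appropriate cross reduction; $s$ being simply typed and normal, the outcome is again a parallel form. The basic case erases $a$ and creates no $r$-critical subterm: $\mathcal{M}$ drops. In the full case the only new channel binder is $b:C_{1}\land\dots\land C_{k}$, the $C_{j}$ being the types of the $\lambda$-bound variables $\sq{y}$ of $u$; by Proposition~\ref{prop:boundhyp} each $C_{j}$ is a proper subformula of a prime factor of $A$ — so in $\mathcal{A}$ — or a strong subformula of a free-variable type of $u$, and in that case the hypothesis on the relevant $A_{i}$ (or the bound on the relevant channel above) plus closure of $\mathcal{A}$ under subformulas makes the prime factors of $C_{j}$ subformulas of $\mathcal{A}$ or of size $<r$; so $b$ has $\mathcal{A}$-communication-complexity $<r$ and the outer parallel on $b$ is not $r$-critical. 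What stays $r$-critical is $\mathcal{C}[\send{b}\,\lan\sq{y}\ran]\p_{a}\mathcal{D}$, with triple $(0,0,o-1)$ — one fewer occurrence of $a$ — while $\mathcal{D}$ and $s^{\,b/\sq{y}}$ contribute none; so $\mathcal{M}$ strictly decreases, and the trailing cleanup reductions $w_{1}\p_{c}w_{2}\mapsto w_{i}$ only erase parallels, so they preserve the invariants and do not increase $\mathcal{M}$.

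The hard part is exactly Clause~(3): showing that a cross reduction can be fired (no unapplied $\send{a}$), and that the fresh channel $b$ has $\mathcal{A}$-communication-complexity strictly below $r$. Both come down to controlling the formulas that can occur as types of the free and bound variables of a parallel-form leaf, and this is where Propositions~\ref{prop:app} and~\ref{prop:boundhyp}, closure of $\mathcal{A}$ under subformulas, and the dichotomy in the hypothesis on the strong and the proper prime subformulas of each $A_{i}$ all come in — in particular to absorb the channel endpoints inherited from the parallel operators above the leaf, which is the point overlooked in the proof for $\lamg$. A secondary point is that the duplicating permutation of Clause~(1) must not blow $\mathcal{M}$ up, which works only because no-nesting guarantees that the duplicated component carries no $r$-critical subterm.
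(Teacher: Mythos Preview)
Your argument follows the paper's strategy; the only real difference is bookkeeping. The paper argues by lexicographic induction on the pair $(\rho,k)$---the maximal $\mathcal{A}$-complexity among the $r$-critical subterms and the number of subterms attaining it---whereas you use the multiset of $(d,l,o)$-triples of all $r$-critical subterms. Both measures are well-founded and decrease for the same reasons in each clause, so this is a cosmetic variation.

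There is one point you gloss over. In Clause~(1) you write that the permutation is ``licensed because the displaced channel $a$ has complexity $r>0$''. But the side condition on the permutation rules in Figure~\ref{fig:redem} is that the \emph{communication complexity} of the outer channel in the sense of Definition~\ref{def:comcompbroad} is positive, not its $\mathcal{A}$-communication-complexity. These are distinct notions: the former is computed from the type of $u\p_{a}v$ and the types of its actual free variables, the latter from the fixed set $\mathcal{A}$. That the former is positive whenever the latter equals $r>0$ is exactly what the paper spends the first paragraph of its case~(a) establishing, via the same analysis of strong subformulas of the $A_{i}$ and of the channel endpoints inherited from above that you carry out in Clause~(3). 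You should either repeat that analysis in Clause~(1) or, better, factor it out as a separate observation used in both places. A smaller point: in Clause~(3) the claim ``$\send{a}$ occurs because $c=r>0$'' is not warranted by the definitions; when $a$ fails to occur in one component you must dispose of the subterm via the trivial cross reduction $u\p_{a}v\mapsto u$ (resp.\ $v$), as the paper does explicitly in its sub-case ``$a$ does not occur in $u_{i}$''.
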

\begin{proof}
We prove the lemma by lexicographic induction on the pair
$(\rho , k)$
where $k$ is the number of subterms of $t$ with
maximal $\mathcal{A}$-complexity $\rho$ among those with
$\mathcal{A}$-communication-complexity $r$.

Let  $u_1 \p_a u_2$ be the \emph{smallest} subterm of $t$ having
$\mathcal{A}$-complexity $\rho$. Four cases can occur.

\noindent \textbf{(a)} $\;$ $\rho =(r, d, l, o)$, with $d>0$. We first show that the term
$u_1 \p_a u_2$ is a redex. Now, the free variables of  $u_1 \p_a u_2$ are among     
$x_{1}^{A_{1}},\ldots, x_{n}^{A_{n}}, a_{1}^{B_{1}},\ldots, a_{p}^{B_{p}}$ and the
communication kind of $a$ is $D$.
Hence, suppose by contradiction that all the prime factors of $D$ are proper subformulas of $A$ or strong subformulas of one among
$A_1, \ldots, A_n, B_1, \ldots, B_p$.
Given that $r>0$ there is a prime factor $P$ of   $D$  such that
$P$ has $r$ symbols and does not belong to $ \mathcal{A}$. The
possible cases are two: (i) $P$ is a proper subformula of a prime
proper subformula $A'_i $ of $A_i$ such that $ A'_i \notin
\mathcal{A}$; (ii) $P$, by Prop.~\ref{propositionstrongsubf},
is a proper subformula of a prime factor of $B_i$. If (i),
then the number of symbols of $A_i'$ is less than or equal to $r$, so
$P$ cannot be a proper subfomula of $A_i'$, which is a
contradiction. If (ii), then, since by hypothesis   { $a_i^{B_i}$  is bound in $t$, there is a prime factor of $B_i$} having a number of symbols greater than $r$, hence we conclude
that there is a subterm $w_1 \p_b w_2$ of $t$ having
$\mathcal{A}$-complexity greater than $\rho$, which is absurd.

Now, since $d>0$, we may assume that for some $1 \leq i \leq 2$,  $u_i
= w_1 \p_b w_2$. Suppose $i=2$.
The term $u_1 \p_a (w_1 \p_b w_2)$
is then a redex of $t$ and by replacing it with $(*)$:  $(u_1 \p_a w_1) \p_b(u_1 \p_a  w_2)$
we obtain from $t$ a term $t'$ such that $t\succ t'$
according to Def.~\ref{defredstrategy}. We must verify that we can
apply to $t'$ the main induction hypothesis. Indeed, the reduction
$t\succ t'$ duplicates all the subterms of $v$, but all of their
$\mathcal{A}$-complexities are smaller than $r$, because $u_1 \p_au_2$ 
by choice is the smallest subterm of $t$ having
maximal $\mathcal{A}$-complexity $\rho$. The terms
$(u_1 \p_a w_i) $ for $1\leq i \leq 2$
have smaller $\mathcal{A}$-complexity than
$\rho$, because they have numbers of occurrences of the symbol
$\parallel$ strictly smaller than in  $u_1 \p_au_2$. 
Moreover, the
terms in $t'$ with $(*)$ as a subterm have, by hypothesis,
$\mathcal{A}$-communication-complexity smaller than $r$ and hence
$\mathcal{A}$-complexity smaller than $\rho$. Assuming that the
communication kind of $b$ is $F$, the prime factors of $F$ that are not in $\mathcal{A}$ must have fewer symbols than the prime
factors of $D$ that are not in $\mathcal{A}$, again because $u_1
\p_au_2$ by choice is the smallest subterm of $t$ having
maximal $\mathcal{A}$-complexity $\rho$; hence, the
$\mathcal{A}$-complexity of $(*)$ is smaller than $\rho$. Hence
the number of subterms of $t'$ with $\mathcal{A}$-complexity $\rho$ is
strictly smaller than $k$. By induction hypothesis, $t'\succ^{*} t''$,
where $t''$ satisfies the thesis.

\noindent \textbf{(b)} $\;$ $\rho=(r, d, l, o)$, with $d=0$ and $l>0$. Since $d=0$, $u_1 , u_2
$ are simple parallel terms
 -- and thus strongly normalizable
\cite{Girard} -- so we may assume that for $1 \leq i \leq 2$, $u_i \mapsto^{*} u_i ' \in \nf$ by a sequence intuitionistic reduction
rules. By replacing in $t$ the subterm $u_1 \p_au_2 $ with  $u_1 '
\p_au_2'  $, we obtain a term $t'$ such that $t\succ t'$
according to Def.~\ref{defredstrategy}. Moreover, the terms in
$t'$ with   $u_1 '
\p_au_2'  $ as a subterm have, by hypothesis,
$\mathcal{A}$-communication-complexity smaller than $r$ and hence
$\mathcal{A}$-complexity is smaller than $\rho$. By induction
hypothesis, $t'\succ^{*} t''$, where $t''$ satisfies the thesis.

\noindent \textbf{(c)}  $\;$  $\rho=(r, d, l, o)$, with $d=l=0$. Since $d=0$, $u_1 , u_2$ are
simply typed $\lambda$-terms. Since $l=0$, $u_1 ,u_2$ are in normal
form and thus satisfy conditions 1.\ and 2.\ of
Prop.~\ref{prop:boundhyp}. We need to check that   $u_1 
\p_au_2  $is a redex, in particular that the
communication complexity of $a$ is greater than $0$. Assume that the
free variables of   $u_1 
\p_au_2 $ are among   { 
$x_{1}^{A_{1}},\ldots,
x_{n}^{A_{n}}, a_{1}^{B_{1}},\ldots,
a_{p}^{B_{p}}$ and that the communication kind of $a$
is $D$. As we argued above, we obtain that not all the
prime factors of $D$ are proper subformulas of $A$ or
strong subformulas of one among $A_1, \ldots, A_n, B_1,
\ldots, B_p$. By Def.~\ref{def:comcompbroad}, $u_1 
\p_au_2 $ is a redex.}

We now prove that every occurrence of $a$ in $u_1 ,u_2$ is of
the form $a\, \xi$ for some term or projection $\xi$. First of all,
$a$ occurs with arrow type in all $u_1,u_2$. Moreover, $u_1
:A , u_2:A$, since $t: A$ and $t$ is a parallel form; hence, the
types of the occurrences of $a$ in $u_1 , u_2$ cannot be
subformulas of $A$, otherwise $r=0$,   and cannot be proper subformulas
of one among $A_{1}, \ldots, A_{n}, B_{1}, \ldots,
B_{p}$, otherwise the prime factors of $D$ would be strong subformulas of one among $A_1, \ldots, A_n,
B_1, \ldots, B_p$ and thus we are done.
Thus by Prop.~\ref{prop:app} we are done.
 Two cases can occur.

\noindent $\bullet$  $a$ does not occur in  $u_i$ for $1 \leq i \leq 2$. 
By performing a cross reduction, we replace in $t$ the term $u_1 \p_a u_2$
with $ u_i $ and obtain a term $t'$ such that
$t\succ t'$ according to Def.~\ref{defredstrategy}. After the
replacement, the number of subterms having maximal
$\mathcal{A}$-complexity $\rho$ in $t'$ is strictly smaller than the
number of such subterms in $t$. By induction hypothesis, $t'\succ^{*}
t''$, where $t''$ satisfies the thesis.

\noindent $\bullet$  $a$ occurs in all the subterms $u_1 , u_2$.  Let $u_1 = (\dots
\p {\mathcal C} [\send{a} \,w ]\p \dots )$ where $\send{a}:\NON D$, $(\dots \p {\mathcal C} [\send{a} \,w ]\p
\dots )$ is a normal simple parallel term, ${\mathcal C} [\; ]$ is a simple context,
and the displayed occurrence of
$\send{a}$ is  rightmost in ${\mathcal C} [\send{a} \,w ]$. 
By applying a cross reduction to ${\mathcal C} [\send{a}\,w ] \p_a u_2$
we obtain either the term
$ u_2 [w / a ]$
or the term $(\ast\ast)$ 
 $({\mathcal C} [\send{b} \lan \sq{y}\ran ] \p_a u_2 ) \p_b u_2 [w^{\,
   b  /  \sq{y}}  / a ]$
    where $\send{b}:\NON B$, $\sq{y}$ is the sequence of the free variables of $w_z$
which are bound in $\mathcal{C}[\send{a}\, w]$ and $\send{a}$ does
not occur in $w$. 
   In the former case, the term has $\mathcal{A}$-complexity strictly smaller than $\rho$ and we are done. In the latter case, 
 since $u_1 , u_2$ satisfy conditions 1.\
and 2.\ of Prop.~\ref{prop:boundhyp}, the types $Y_1 ,
\dots , Y_k$ of the variables $\sq{y}$ are proper subformulas of $A$ or
strong subformulas of the formulas $A_{1}, \ldots, A_{n}, B_{1},
\ldots, B_{p}$. Hence, the types among $Y_{1}, \ldots, Y_{k}$ which
are not in $\mathcal{A}$ are strictly smaller than all the prime
factors of the formulas $B_1, \ldots , B_p$.  Since the communication
kind of $b$ consists of the formulas $Y_{1}\et \ldots \et Y_{k}$, by
Def.~\ref{def:acomplex}  the $\mathcal{A}$-complexity
of the term $(\ast\ast)$ above is strictly smaller than the
$\mathcal{A}$-complexity $\rho$ of $u_1 \p_a u_2$.

Now, since $ {\mathcal C} [\send{a}\,w ] , u_2   $  normal simple parallel terms, $
{\mathcal C} [\send{b} \lan \sq{y}\ran ]$ 
is normal too and contain fewer occurrences of $\send{a}$ than ${\mathcal C} [\send{a}\,w]$ does; hence, the
$\mathcal{A}$-complexity of the term
$\quad {\mathcal C} [\send{b} \lan \sq{y}\ran ] \p_a u_2 \quad $
%
is
strictly smaller than the $\mathcal{A}$-complexity $\rho$ of $u_1 \p_a
u_2$. Let now $t'$ be the term obtained from $t$ by
replacing the term ${\mathcal C} [\send{a}\,w ] \p_a u_2$
with $(\ast \ast)$.
By construction $t\succ t'$.  Moreover, the terms in $t'$
with \mbox{$(\ast\ast)$} as a subterm have, by hypothesis,
$\mathcal{A}$-communication-complexity smaller than $r$ and hence
$\mathcal{A}$-complexity  smaller than $\rho$. Hence, we can apply
the main induction hypothesis to $t'$ and obtain by induction
hypothesis, $t'\succ^{*} t''$, where $t''$ satisfies the thesis.

\noindent \textbf{(d)} $\;$  $\rho =(r, d, l, o)$, with $d=l=o=0$. Since $o=0$, $u_1\p_a u_2$
is a redex. Let us say $a$ does not occur
in $u_i$ for $1 \leq i \leq 2$. By performing a cross reduction, we
replace $u_1 \p_au_2$ with $ u_i $ according to
Def.~\ref{defredstrategy}. Hence, by induction hypothesis,
$t'\succ^{*} t''$, where $t''$ satisfies the thesis.
\end{proof}

\begin{theorem}\label{thm:norm} Let  $ t: A$ be a $\lama$-term. 
  Then $t\mapsto^{*} t': A$, where $t'$ is a normal parallel
  form.
\end{theorem}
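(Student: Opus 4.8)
The plan is to run the master reduction strategy $\nor{N}$ of Definition~\ref{defi-mastredstrategy} on $t$ and prove that it always halts. Every elementary move performed by $\nor{N}$ is a genuine reduction of $\lama$, so Subject Reduction (Theorem~\ref{subjectred}) guarantees that the output still has type $A$ and no fresh free variables, while Proposition~\ref{prop:parallelform} guarantees that a term on which no reduction fires is a parallel form; hence it suffices to establish termination, the first clause of $\nor{N}$ being justified once and for all by Proposition~\ref{propositionnormpar}. The two engines of the termination argument are already available: strong normalization of the simply typed $\lambda$-calculus~\cite{Girard} for the leaves of a parallel form, and Lemma~\ref{lem:uebermensch} for collapsing one level of communication complexity.

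Concretely, I would fix $\mathcal{A}$ to be the set of all proper subformulas of $A$ together with all strong subformulas of the types $A_1,\dots,A_n$ of the free variables of $t$; this set is already closed under subformulas and contains every proper subformula of $A$, so it meets the global hypotheses of Lemma~\ref{lem:uebermensch}. The argument then proceeds by lexicographic induction on the pair $(r,k)$, where $r$ is the maximal $\mathcal{A}$-communication-complexity among the subterms of the current term and $k$ is the number of subterms attaining it (with the structure of the term as a tertiary measure for the recursive calls). If $t$ is simply typed, normalize it by~\cite{Girard} and stop. If $t$ is a redex, select the \emph{smallest} subterm $w=u_1\p_a u_2$ of maximal $\mathcal{A}$-communication-complexity $r$: by minimality no proper subterm of $w$ has $\mathcal{A}$-communication-complexity $r$, and the free variables of $w$ — the global ones $x_i$, whose strong subformulas lie in $\mathcal{A}$ by construction, together with the channels bound above $w$, whose communication kinds have offending prime factors of at most $r$ symbols since their enclosing $\p$-subterms have $\mathcal{A}$-communication-complexity $\le r$ — satisfy the remaining hypotheses of Lemma~\ref{lem:uebermensch}. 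Applying the lemma yields $w\succ^* w'$ with all $\mathcal{A}$-communication-complexities in $w'$ strictly below $r$; replacing $w$ by $w'$ in $t$ gives $t'$ with $t\mapsto^* t'$, and since the $\mathcal{A}$-communication-complexity of any subterm enclosing $w$ depends only on its channel's communication kind and on the fixed set $\mathcal{A}$ — neither of which changes — the pair $(r,k)$ strictly decreases and the induction hypothesis applies to $t'$. The remaining case is $t=u_1\p_a u_2$ not a redex, handled by clause~\ref{dot:simply}: recursively normalize the parallel components and then treat the reassembled term $u_1'\p_a u_2'$ by a further call to $\nor{N}$.

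The step I expect to be the real obstacle is precisely this last case. Normalizing the components can delete free variables, and since the legality side condition of the calculus refers to the \emph{actual} communication complexity (Definition~\ref{def:comcompbroad}) rather than the $\mathcal{A}$-relative one, the reassembled $u_1'\p_a u_2'$ can be a brand-new communication redex — exactly the phenomenon warned about after Definition~\ref{defi-mastredstrategy}. One must show this does not compromise well-foundedness: the re-created redex sits on a term with a strictly smaller set of free variables, hence a strictly smaller associated $\mathcal{A}$, so the recursive call is on a genuinely simpler object, and one must check that neither the basic nor the full cross reductions fired by Lemma~\ref{lem:uebermensch} ever raise an $\mathcal{A}$-communication-complexity — the full cross reduction introduces a fresh channel whose communication kind is a conjunction of strong subformulas of the premises or proper subformulas of the conclusion (Proposition~\ref{prop:boundhyp}), hence of strictly smaller size. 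Pinning down these monotonicity invariants in the presence of the duplication caused by the parallel-operator permutations, and threading them through the combined $(r,k,\textrm{structure})$ measure, is where the proof — which follows and completes the $\lamg$ argument of~\cite{lics2017} via Lemma~\ref{lem:uebermensch} — requires the most care.
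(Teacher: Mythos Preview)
Your strategy is exactly the paper's: reduce to parallel form, then run $\nor{N}$, using strong normalization of the simply typed fragment for the leaves and Lemma~\ref{lem:uebermensch} for the redex case, with the not-a-redex recursive case as the delicate point. You also correctly isolate the obstacle: normalizing the components can erase free variables and turn the reassembled $u_1'\p_a u_2'$ into a fresh redex.

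The gap is in your well-founded measure. You propose lexicographic induction on $(r,k,\text{size})$ with $\mathcal{A}$ \emph{fixed} once and for all, and then, for the troublesome case, appeal to ``a strictly smaller associated $\mathcal{A}$''. These two choices are incompatible. If $\mathcal{A}$ is fixed, then when $u_1'\p_a u_2'$ becomes a redex nothing in $(r,k,\text{size})$ need decrease: the term may well be larger than $t$, and its maximal $\mathcal{A}$-communication-complexity can equal $r$. If instead $\mathcal{A}$ is recomputed from the current term, then shrinking $\mathcal{A}$ to $\mathcal{A}'\subsetneq\mathcal{A}$ can \emph{increase} $r$, since fewer prime factors are excluded by membership in the smaller set; so $(r,k)$ is not monotone across the call either. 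Your informal ``genuinely simpler object'' is correct intuition but is not captured by the stated triple.

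The paper closes this gap by putting the cardinality $|\mathcal{A}|$ as the \emph{leading} component of a quadruple $(|\mathcal{A}|,r,k,s)$, with $\mathcal{A}$ recomputed at every call from the current term's conclusion type and free variables. In the redex case $|\mathcal{A}|$ is unchanged and Lemma~\ref{lem:uebermensch} drops $(r,k)$; in the not-a-redex case $t=u_1\p_a u_2$, one first shows $\mathcal{A}_i\subseteq\mathcal{A}$ for each component (using that the communication complexity of $a$ is $0$, so the prime factors of its kind already lie in $\mathcal{A}$, whence by Proposition~\ref{propositionstrongsubf} the strong subformulas of the type of $a$ do too), so the recursive calls on the $u_i$ are justified either by $|\mathcal{A}_i|<|\mathcal{A}|$ or by equal $|\mathcal{A}|$ and smaller size; and if the reassembled $u_1'\p_a u_2'$ is a new redex, the loss of free variables forces its $\mathcal{B}\subsetneq\mathcal{A}$, so $|\mathcal{B}|<|\mathcal{A}|$ and the outer induction fires. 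Once you prepend $|\mathcal{A}|$ to your measure and let $\mathcal{A}$ vary with the term, your sketch goes through verbatim.
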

\begin{proof} 
By Prop.~\ref{propositionnormpar}, we can assume that $t:A$ is in parallel form.
Assume now that the free variables of $t$ are
$x_{1}^{A_{1}},\ldots, x_{n}^{A_{n}}$ and let $\mathcal{A}$ be the set
of the proper subformulas of $A$ and the strong subformulas of the
formulas $A_{1}, \ldots, A_{n}$.  We prove the theorem by
lexicographic induction on the quadruple $(|\mathcal{A}|, r, k, s)$ where $|\mathcal{A}|$ is the cardinality
of $\mathcal{A}$, $r$ is the maximal
$\mathcal{A}$-communication-complexity of the subterms of $t$, $k$ is
the number of subterms of $t$ having maximal
$\mathcal{A}$-communication-complexity $r$ and $s$ is the size of
$t$. If $t$ is a simply typed $\lambda$-term, it has a normal form
\cite{Girard} and we are done; so we assume $t$ is not. There are two
main cases.

\noindent $\bullet$ \textit{First case}: $t$ \emph{is not a redex}.  
Let
$t= u_1 \p_a u_2$ and let $C$ be the communication kind of $a$.
Then, the communication complexity of $a$ is $0$ and by
Def.~\ref{def:comcompbroad} every prime factor of $C$ 
belongs to $\mathcal{A}$. Let the types of the occurrences of $a$
in $u_i$ for $1 \leq i \leq 2 $ be $B_i$, with $B_{i}=\lnot C$ or $B_{i}=C$. Let now  $\mathcal{A}_i$ be the set of the proper
subformulas of $A$ and the strong subformulas of $A_{1},
\ldots, A_{n}, B_i$. By Prop.~\ref{propositionstrongsubf},
every strong subformula of $B_i$ is a proper subformula of a prime factor
of $C$, and this prime factor is in
$\mathcal{A}$. Hence, $\mathcal{A}_i\subseteq \mathcal{A}$.

If $\mathcal{A}_i= \mathcal{A}$, the maximal
$\mathcal{A}_i$-communication-complexity of the terms of $u_i$ is less
than or equal to $r$ and the number of terms with maximal
$\mathcal{A}_i$-communication-complexity is less than or equal to $k$;
since the size of $u_i$ is strictly smaller than that of $t$, by
induction hypothesis $u_i\mapsto^{*} u_i'$, where $u_i'$ is a normal
parallel form.

If $\mathcal{A}_i\subset\mathcal{A}$, again by induction hypothesis
$u_i\mapsto^{*} u_i'$, where $u_i'$ is a normal parallel form.

Let now $t'= u_1 ' \p_a u_2' $, so that $t\mapsto^{*}
t'$. If $t'$ is normal, we are done. Otherwise, since
$u_j'$ for $1\leq j \leq 2 $ are normal, the only possible redex remaining in $t'$ is the
whole term itself, i.e., $u_1 ' \p_a u_2'$: this
happens only if the free variables of $t'$ are fewer than those of
$t$; w.l.o.g., assume they are $x_{1}^{A_{1}}, \ldots, x_{i}^{A_{i}}$,
with $i< n$.  Let $\mathcal{B}$ be the set of the proper subformulas
of $A$ and the strong subformulas of the formulas $A_{1}, \ldots,
A_{i}$. Since $t'$ is a redex, the communication complexity of $a$ is
greater than $0$; by def.~\ref{def:comcompbroad},  a
prime factor of $C$ is not in $\mathcal{B}$, so we have
$\mathcal{B}\subset \mathcal{A}$. By I.H.,
$t'\mapsto^{*}t''$, where $t'' $ is a parallel normal form.

\noindent $\bullet$ \textit{Second case}: $t$ \emph{is a redex}.  Let $u_1 \p_a u_2$ be the \emph{smallest} subterm of $t$ having
$\mathcal{A}$-communication-complexity $r$. The free variables of
$u_1 \p_a u_2 $ satisfy the hypotheses of
Lem.~\ref{lem:uebermensch} either because have type $A_i$ and
$\mathcal{A}$ contains all the strong subformulas of $A_i$, or because
the prime proper subformulas of their type have at most $r$ symbols,
by maximality of $r$. By Lem.~\ref{lem:uebermensch} $u_1 \p_a u_2 \succ^* w $ where the maximal among the
$\mathcal{A}$-communication-complexity of the subterms of $w$ is
strictly smaller than $r$. Let $t'$ be the term obtained replacing $w$
for $u_1 \p_a u_2$ in $t$. We apply the
I.H.\ and obtain $t' \mapsto^*t''$ with $t''$ in
parallel normal form.
\end{proof}





We prove now that the Subformula Property holds: a normal proof does
not contain concepts that do not already appear in the premises or in
the conclusion.


\begin{theorem}[Subformula Property]\label{thm:subf}
Suppose $x_{1}^{A_{1}}, \ldots, x_{n}^{A_{n}}\vdash t: A $, with $t\in
\nf$. Then
$(i)$ for each communication variable $a$ occurring bound in $t$ and
with communication kind $C_1 , \dots , C_m$, the prime factors of $C_1
, \dots , C_m$ are proper subformulas of $A_{1}, \ldots, A_{n}, A$;
$(ii)$ the type of any subterm of $t$ which is not a bound
communication variable is either a subformula or a conjunction of
subformulas of $A_{1}, \ldots, A_{n}, A$.
\end{theorem}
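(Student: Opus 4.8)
The plan is to argue by structural induction on the normal term $t$, proving $(i)$ and $(ii)$ simultaneously. First I would dispose of the case where $t$ is a simply typed $\lambda$-term: then $(i)$ is vacuous and $(ii)$ follows from Prop.~\ref{prop:boundhyp} together with a standard analysis of normal intuitionistic proofs — every subterm of a normal $\lambda$-term is either a bound/free variable or obtained by an elimination or introduction rule, and in each case its type is a subformula of a free variable's type or of the conclusion, using that bound hypotheses are strong subformulas (hence subformulas) of the premises or proper subformulas of the conclusion by Prop.~\ref{prop:boundhyp}. Since the paper only proves the $\lambda$-term case of this fact as a proposition, I would state the subterm-type claim explicitly as the base case and prove it by a short induction internal to the normal $\lambda$-term.

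The inductive step concerns $t = u_1 \parallel_a u_2$ or $t = u_1 \parallel u_2$, which by Prop.~\ref{prop:parallelform} is the only remaining shape since $t$ is in parallel form. For $t = u_1 \parallel_a u_2$ with communication kind $C$: because $t$ is normal, it is not a redex, so the communication complexity of $a$ is $0$, which by Def.~\ref{def:comcompbroad} means every prime factor of $C$ is a proper subformula of $A$ or a strong subformula of some $A_j$ — in either case a proper subformula of one of $A_1,\dots,A_n,A$. This gives the outermost half of $(i)$. For the inductively occurring channels and for $(ii)$, I would apply the induction hypothesis to $u_i$, whose free variables are among $x_1^{A_1},\dots,x_n^{A_n}$ together with $a^{B_i}$ where $B_i \in \{C, \lnot C\}$; the IH bounds everything by proper subformulas of $A_1,\dots,A_n,B_i,A$ (note $u_i : A$ since $t$ is a parallel form). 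It then remains to eliminate $B_i$ from these bounds: any prime factor $P$ of a channel kind inside $u_i$ that is only a subformula of $B_i$ must, since $B_i = \lnot C = C \to \bot$ or $B_i = C$, be a subformula of a prime factor of $C$ (by Prop.~\ref{propositionstrongsubf} for the strong-subformula case, and directly otherwise), hence a proper subformula of some $A_1,\dots,A_n,A$; similarly for the type of a non-channel subterm. The case $t = u_1 \parallel u_2$ (no channel) is easier: apply the IH to each $u_i$ directly, with the same free-variable context, and there is nothing new to bound.

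The main obstacle I expect is the bookkeeping in this last elimination step: showing cleanly that enlarging the parameter set from $\{A_1,\dots,A_n,A\}$ to $\{A_1,\dots,A_n,B_i,A\}$ does not actually enlarge the set of subformulas relevant to $(i)$ and $(ii)$, because $B_i$ differs from $C$ only by a $\to \bot$, and every prime factor of $C$ is already a proper subformula of the original parameters. One has to be careful that "proper subformula of a prime factor of $C$" really does land inside a proper subformula of $A_1,\dots,A_n,A$ — here I would invoke that each prime factor of $C$ is a \emph{proper} subformula (not merely a subformula) of one of $A_1,\dots,A_n,A$, established in the previous paragraph, so any proper subformula of such a prime factor is again a proper subformula of that same formula. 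A secondary subtlety is $\bot$: a bound variable of type $\bot$ (e.g. from an $\mathsf{efq}$ or from $\lnot C$ applied to an argument) must be handled as an explicit exceptional case in $(ii)$, exactly as in Prop.~\ref{prop:app}, but since $\bot$ is a subformula of every formula containing a negation this causes no real trouble. Everything else is a routine propagation of the induction hypothesis through the parallel and contraction constructors.
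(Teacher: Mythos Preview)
Your overall architecture matches the paper's: induction on the normal term $t$, appeal to the Parallel Form Property (Prop.~\ref{prop:parallelform}) to reduce to the simply typed case and the case $t=u_1\parallel_a u_2$, and in the latter case first pin down the communication complexity of the outermost channel and then push the induction through $u_1,u_2$ with the extra free variable $a^{B_i}$. Your third paragraph, on eliminating $B_i$ from the bounds delivered by the IH, is exactly what the paper does as well.

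There is, however, a genuine gap in the key step ``because $t$ is normal, it is not a redex, so the communication complexity of $a$ is $0$''. The second implication is \emph{not} immediate from the definitions: cross reductions carry a syntactic side condition, namely that the left component contains an \emph{applied} occurrence $\send{a}\,w$ (the pattern $\mathcal{C}[\send{a}\,w]$), not merely a bare occurrence of $\send{a}$. So communication complexity $>0$ does not by itself make $u_1\parallel_a u_2$ a redex; you must check that such an applied occurrence exists. The paper supplies exactly this missing step: assuming complexity $>0$, the permutation reductions force $u_1,u_2$ to be simple parallel (hence simply typed, normal) terms; then one argues that the type $\lnot C$ of $\send{a}$ cannot be a subformula of $A$ nor a proper subformula of any free-variable type (else every prime factor of $C$ would already be a proper subformula of $A$ or a strong subformula of some $A_j$, contradicting complexity $>0$); and finally Prop.~\ref{prop:app} is invoked on the normal simply typed $u_1$ to conclude that every occurrence of $\send{a}$ is of the form $\send{a}\,\xi$. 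Only then does the cross-reduction pattern match and yield the desired contradiction with $t\in\nf$. You cite Prop.~\ref{prop:app} only for the unrelated $\bot$ issue; it is actually the linchpin of this step and should be invoked here.

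A minor organizational remark: treating the entire simply typed case as a single base case via Prop.~\ref{prop:boundhyp} is fine, though the paper instead continues the same structural induction through $\lambda$, pairing, application, and $\mathsf{efq}$, which avoids having to restate a separate ``subterm typing'' lemma for normal intuitionistic terms.
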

\begin{proof}
By induction on $t$. 
\end{proof}

\section{On the expressive power of $\lamem$}
\label{Sec:exprpower}
We discuss the relative expressive power of $\lamem$ and its computational capabilities.

\paragraph{Comparison with $\pi$-calculus and $\lambda _G$}
\label{comparison}
%
In contrast with the $\pi$-calculus~\cite{Milner, sangiorgiwalker2003}
which is a formalism for {\em modeling} concurrent systems, $\lamem$ is a parallel {\em functional language} 
intended (as a base) for programming.
The first similarity between the two calculi is in the channel restrictions: the $a$ of $u
\p_a v $ in $\lama$ and of $\nu a (P \mid Q )$ in
$\pi$-calculus have the same r\^ole. Moreover the result of communicating in $\lamem$
a closed process or data is as in the asynchronous $\pi$-calculus~\cite{HT}.
In contrast with the $\pi$-calculus whose communication only applies to data or channels,
the communication in $\lama$ is higher-order. Moreover, the latter
can handle not only closed and
open processes, but also processes that are closed in their original
environment, but become open after the communication.
The number of recipients of a communication is another difference
between $\pi$-calculus and
$\lama$. While in pure $\pi$-calculus both sender and recipient of a
communication might be selected non-deterministically, in
$\lama$, since the communication is a broadcasting to all recipients,
only the sender, ${\mathcal C}_i[\send{a} \, t_i]$ for $i \in \{1
,\dots , n \}$ in the following example,
can be non-deterministically selected: \[({\mathcal C}_1[\send{a} \, t_1] \p \dots \p {\mathcal C}_n
[\send{a} \, t_n]) \p_a {\mathcal D} \quad  \mapsto \quad  {\mathcal
D}[t_i/a]\]
Furthermore, in $\pi$-calculus only one process can receive each message
whereas in $\lama$ we can have one-to-many communications, or broadcast:
\[{\mathcal C}[\send{a} \, u] \p_a( {\mathcal D}_1 \p \dots \p
{\mathcal D}_m) \quad \mapsto \quad  {\mathcal D}_1[u/a] \p\dots \p
{\mathcal D}_m [u/a]\]  Finally, while in $\pi$-calculus there is
no restriction on the use of channels between processes, in $\lama$
there are strict symmetry conditions; similar conditions are adopted
in typed versions of $\pi$-calculus (see \cite{TCP2013,
Wadler2012}). Hence $\lama$ cannot encode a \emph{dialogue} between
two processes: if a process $u$ receives a message from a process $v$,
then $v$ cannot send a message to $u$. To model these exchanges, more
complex calculi such as $\lambda _G$~\cite{lics2017} are
needed. If a $\lamem$ channel connects two processes as shown below on the left, a
$\lambda _G$ channel connects them as shown below on the right.\\
\begin{figure}[h!]  \centering \tikzstyle{proc}=[circle, minimum
size=3mm, inner sep=0pt, draw]
  \begin{tikzpicture}[node distance=1.5cm,auto,>=latex', scale=1.5]
\node [proc] (1) at (-3.3,0) {}; \node [proc] (2) at (-1.3,0) {};
\path[->] (1) edge [thick, bend left] (2); \node [proc] (3) at (1.3,0)
{}; \node [proc] (4) at (3.3,0) {}; \path[<->] (3) edge [thick, bend
left] (4);
\end{tikzpicture}
\end{figure}\\ Namely, a $\lamg$ channel can transmit messages between
the processes in both directions. Even though the communication
mechanism of $\lamg$ enables us to define unidirectional channels as
well, the technical details of $\lamg$ and $\lamem$ communications
differ considerably since they are tailored, respectively, to the
linearity axiom $(A\impl B) \vel (B \impl A)$ and to $\emiddle$.
In $\lamem$ indeed all occurrences of the
receiver's channel are simultaneously replaced by the message, but
this is not possible in $\lamg$. As a consequence, $\lamg$ cannot
implement broadcast communication.  Finally, while the closure
transmission mechanisms of $\lamg$ and $\lamem$ have the same function
and capabilities -- a version of Example~\ref{ex:optimize} for $\lamg$
is presented in~\cite{lics2017} -- $\lamem$ mechanism is considerably
simpler.


We establish first the relation of $\lamem$ with the simply
typed $\lam$-calculus and Parigot's $\lam _\mu$~\cite{Parigot},
by proving in particular that $\lamem$, as $\lamg$, can code the parallel OR. 
Then we show the use of $\lamem$ closure transmission for code
optimization. 

\begin{proposition} \label{power} $\lamem$ is strictly more
  expressive than simply typed $\lam$-calculus and propositional
  $\lam _\mu$.
\end{proposition}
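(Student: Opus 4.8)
The plan is to show, for each of the two calculi, both halves of ``strictly more expressive'': that $\lamem$ simulates it, and that some function is $\lamem$-definable but not definable in it; the separating function will be the \emph{parallel OR}. For the simply typed $\lambda$-calculus the first half is immediate: a $\lamem$-term in which neither $\parallel$ nor $\parallel_{a}$ occurs is literally a simply typed $\lambda$-term (Table~\ref{tab:typing}), and on such terms the rules of Figure~\ref{fig:redem} restrict to the ordinary $\beta\pi$-reductions, so the very same term defines the very same function. For propositional $\lambda_{\mu}$ I would exploit that $\NJ+(\emiddle)$ proves every classically valid propositional formula, in particular Peirce's law $((A\to B)\to A)\to A$: hence $\lamem$ contains a closed proof term of that type, which realizes $\mathsf{callcc}$, and through it one translates propositional $\lambda_{\mu}$ into $\lamem$ in the spirit of Griffin~\cite{Griffin} and of de Groote's reading of $\lambda_{\mu}$ inside an exception calculus~\cite{deGrooteex}, sending a $\mu$-abstraction to an application of $(\emiddle)$ whose channel $a$ plays the rôle of the bound name. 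One then checks that $\mu$- and renaming-reductions are simulated, on ground-type outputs, by basic cross reductions and permutations; respecting the strict symmetry condition on channels of Table~\ref{tab:typing} while doing this is, I expect, the most delicate point. Subject Reduction (Thm.~\ref{subjectred}) and Normalization (Thm.~\ref{thm:norm}) then guarantee the translation lands in well-typed, normalizing terms.

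For the second half I would exhibit a closed $\lamem$-term $\mathrm{POR}$ of type $\Bool\to\Bool\to\Bool$ -- with $\Bool,\True,\False$ the usual closed encodings at a fixed type -- such that for \emph{every} $\lamem$-term $u:\Bool$
\[\mathrm{POR}\;\True\;u\ \mapsto^{*}\ \True,\qquad \mathrm{POR}\;u\;\True\ \mapsto^{*}\ \True,\qquad \mathrm{POR}\;\False\;\False\ \mapsto^{*}\ \False .\]
Such a term is built from a single application of $(\emiddle)$ on a suitable atom: its two premises become two processes which test, in parallel, the first respectively the second argument; the first one to reach the verdict $\True$ broadcasts it along the induced channel, thereby discarding the other (still blocked) test, so that the whole term reduces to $\True$, whereas if both verdicts are $\False$ the term reduces to $\False$. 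This is the $\lamem$ rendering of the parallel-or term written for $\lamg$ in~\cite{lics2017}; well-definedness and termination of the reductions above follow again from Thm.~\ref{subjectred} and Thm.~\ref{thm:norm}.

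It remains to prove that no term with these three properties exists in the simply typed $\lambda$-calculus or in propositional $\lambda_{\mu}$. Both are confluent and normalizing; pick fresh $y_{1},y_{2}:\Bool$ and normalize $\mathrm{POR}\,y_{1}\,y_{2}$. Its normal form cannot be the constant $\True$ (else $\mathrm{POR}\,\False\,\False\mapsto^{*}\True$, contradicting confluence and the third equation) nor the constant $\False$ (else $\mathrm{POR}\,\True\,u\mapsto^{*}\False$), so it must ``inspect'' one of its arguments first, say $y_{1}$; but then $\mathrm{POR}\,y_{1}\,\True$ normalizes to a term still stuck on the free variable $y_{1}$, hence not to $\True$, contradicting the second equation -- and symmetrically if it inspects $y_{2}$ first. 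For $\lambda_{\mu}$, where ground-type normal forms may also be $\mu$-abstractions, the same case analysis goes through after the obvious extension, or one first CPS-translates $\lambda_{\mu}$ into the simply typed $\lambda$-calculus and reduces to the previous case; in any event the impossibility is an instance of the classical sequentiality phenomenon (Berry, Plotkin). Combining the two simulations with this separation gives the proposition; the main obstacle, as noted, is the faithful simulation of $\lambda_{\mu}$ under the symmetry discipline on $\lamem$ channels.
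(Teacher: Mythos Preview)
Your separation argument is essentially the paper's: exhibit a parallel OR in $\lamem$, invoke Berry's sequentiality theorem to rule it out in the simply typed $\lambda$-calculus, and reduce the $\lambda_{\mu}$ case to the simply typed one via a translation. The paper makes the last step precise by citing Ong's embedding of propositional $\lambda_{\mu}$ into simply typed $\lambda$-calculus (Lemma~6.3.7 in~\cite{Sorensen}), whose key property is that $\underline{s\,t}=\underline{s}\;\underline{t}$ and $\underline{x}=x$; this compositionality on application and variables is exactly what lets one conclude that a $\lambda_{\mu}$ parallel OR would translate to a simply typed one. Your ``CPS-translate and reduce to the previous case'' is the same move, though you should name the specific translation and check it preserves application.

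The genuine divergence is elsewhere. You spend the first paragraph arguing that $\lamem$ \emph{simulates} propositional $\lambda_{\mu}$, by realizing $\mathsf{callcc}$ through a closed inhabitant of Peirce's law and then encoding $\mu$-abstraction via $(\emiddle)$. You yourself flag this as ``the most delicate point'' and ``the main obstacle'', and with reason: the symmetry discipline on $\lamem$ channels ($\send{a}$ only on one side, $a$ only on the other) does not obviously accommodate the free naming/renaming of $\lambda_{\mu}$, and a reduction-preserving translation along these lines is far from routine. The paper simply does not attempt this direction. Its proof establishes only (i) the trivial embedding of simply typed $\lambda$-calculus into $\lamem$, (ii) a concrete parallel OR in $\lamem$, and (iii) the non-existence of parallel OR in simply typed $\lambda$-calculus and, via Ong, in $\lambda_{\mu}$. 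No embedding of $\lambda_{\mu}$ into $\lamem$ is claimed or proved. So the part of your proposal that you identify as hardest is extraneous to what the paper actually shows; drop it, and what remains coincides with the paper's argument.
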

\begin{proof} 
The simply typed $\lambda$-calculus can be trivially embedded into $\lamem$. The converse
does not hold, as $\lamem$ can encode
the parallel OR, which is a term $\mathsf{O}: \mathsf{Bool}
\IMPL \mathsf{Bool} \IMPL\mathsf{Bool} $ such that
$\mathsf{O}\mathsf{F}\mathsf{F} \mapsto^* \mathsf{F}$,
$\mathsf{O}u\mathsf{T} \mapsto^* \mathsf{T}$, $\mathsf{O}\mathsf{T}u
\mapsto^* \mathsf{T}$ for every term $u$.
By contrast, as a consequence of Berry's sequentiality theorem
(see~\cite{Barendregt})
there is no parallel OR in simply typed $\lam$-calculus. 
Assuming to add the boolean type in our calculus, that  $\send{a}:
\mathsf{Bool}  \et  S  \IMPL \bot$, and that 
$a: \mathsf{Bool}  \et  S   $, ``$\mathsf{if} \, \_ \, \mathsf{then}
\, \_ \, \mathsf{else} \, \_$'' is as usual, the $\lamem$ term for such
parallel OR is
%
%
%
\[\mathsf{O} := \lambda x^{\mathsf{Bool}} \, \lambda y^{\mathsf{Bool}}
(\mathsf{if} \; x \; \mathsf{then} \; \mathsf{T} \;
    \mathsf{else} \; \efq{\mathsf{Bool}} {\send{a}   \lan
      \mathsf{F}  ,  s \ran }  \parallel_a \mathsf{if} \; y \;
    \mathsf{then} \; \mathsf{T} \; \mathsf{else} \; a  \, \pi_0) \]
 for any flag term $s:S$, introduced for a complete control of the
 reduction. Now, $\mathsf{O}\, u \,\mathsf{T}$ reduces to $\mathsf{T}$ by
 \begin{footnotesize}
   \begin{align*} 
(\mathsf{if} \; u \; \mathsf{then} \; \mathsf{T} \; \mathsf{else} \;
\efq{\mathsf{Bool}}{\send{a}  \lan
      \mathsf{F}  ,  s \ran})
                                       \parallel_{a} (\mathsf{if} \;
\mathsf{T} \; \mathsf{then} \; \mathsf{T} \; \mathsf{else} \; a  \, \pi_0
) \mapsto^{*} (\mathsf{if} \; u \; \mathsf{then} \; \mathsf{T}
\; \mathsf{else} \; \efq{\mathsf{Bool}}{\send{a}  \lan
      \mathsf{F}  ,  s \ran}) \parallel_{a} \mathsf{T} \; \mapsto \; \mathsf{T}
   \end{align*}
 \end{footnotesize}
And symmetrically $\mathsf{O}\,  \mathsf{T}\, u \,
\mapsto^{*} \, \mathsf{T} $. 
On the other hand, $\mathsf{O} \, \mathsf{F} \,\mathsf{F} $ reduces to $\mathsf{F}$ by 
\begin{small}
  \begin{align*}
                                            (\mathsf{if} \; \mathsf{F} \; \mathsf{then} \; \mathsf{F} \; \mathsf{else}
                                            \;  \efq{\mathsf{Bool}}{\send{a}   \lan
      \mathsf{F} {  ,  s \ran}})    \parallel_{a} (\mathsf{if} \; \mathsf{F} \;
                                            \mathsf{then} \;
                                              \mathsf{F} \;
                                              \mathsf{else} \; a {  \, \pi_0} )  \mapsto^{*}  \efq{\mathsf{Bool}}{\send{a}    {  \lan }
      \mathsf{F} {  ,  s \ran} }
                                                                                                          \parallel_{a}
a {  \, \pi_0 \; \mapsto \;    \lan \mathsf{F}  ,  s \ran \, \pi_0} \; \mapsto \;    \mathsf{F}
  \end{align*}
\end{small} The claim follows by Ong's embedding of propositional $\lam
_\mu$ in the simply typed $\lam$-calculus, see Lemma 6.3.7
of~\cite{Sorensen}. Indeed the  translation $\underline{u}$ of
a $\lam\mu$-term $u$ is such that $\underline{s\, t} = \underline{s}\;
\underline{t}$ and $\underline{x} = x$ for any variable $x$, if there
were a typed $\lam\mu$-term $\mathsf{O}$ for parallel OR, then
\begin{footnotesize}
 \[   \underline{\mathsf{O}} \,x
    \, \mathsf{T} \; =\;  \underline{\mathsf{O} \,x \, \mathsf{T}}
    \quad   \mapsto^* \quad \underline{\mathsf{T}} \; = \;
             \mathsf{T} \qquad\quad    
    \underline{\mathsf{O}} \,
    \mathsf{T} \, x  \; =\;  \underline{\mathsf{O} \, \mathsf{T} \,
    x}\quad \mapsto^* \quad  \underline{\mathsf{T}} \; = \;   \mathsf{T}
    \qquad\quad   \underline{\mathsf{O}} \,
    \mathsf{F} \, \mathsf{F} \; =\;  \underline{\mathsf{O} \,
    \mathsf{F} \, \mathsf{F}} \quad  \mapsto^* \quad \underline{\mathsf{F}} \; = \;  \mathsf{F} 
\]
\end{footnotesize}
and $\underline{\mathsf{O}}$ would be a parallel OR in simply typed
$\lam$-calculus, which is impossible.
\end{proof}




\begin{example}[\textbf{Classical Disjunction}] Since in classical
logic disjunction is definable, the corresponding computational
constructs of case distinction and injection can be defined in
$\lamem$.  By contrast, these constructs are usually added as new
primitives in simply typed $\lambda$-calculus, as simulating them
requires complicated CPS-translations.



The $\lamem$  terms $\inj_0(u)$, $\inj_1(u)$ and $t[x_0.v_0, x_1.v_1]$ such
that for $i \in \{0,1\}$ we have $ \; \inj_i(u)[x_0.v_0, x_1.v_1] \; \mapsto\;
v_i [ u / x_i ] $ are defined as follows: Let $A \vel B  := (A \impl \fal)\impl(
  B \impl \fal )\impl \fal$
\vspace{-4pt}
\begin{gather*}
\inj_0(u) \; :=\; \lam x^{A \impl \fal} \lam y^{B \impl \fal}\,
    x\,u : A \vel B \qquad\qquad \inj_1(u)\; :=\; \lam x^{A \impl
      \fal} \lam y^{B \impl \fal}\, y\,u : A \vel B  \\ 
    t\, [x_0.v_0, x_1.v_1] \;  := \; ( \efq{F}{}(t \, \send{a}\,
      \send{b}) \p_a v_0 [a/x_0] ) \p_b v_1 [b/x_1] :F\end{gather*}where
  $a:A,\, \send{a}:A \impl \fal , \, b:B\, , \send{b}:B \impl \fal,\,
  v_0:F, \, v_1:F,\, t:A \vel B$ and $\efq{F}{} $ is a closed term of
  type $\fal \impl F$.
We can then verify, for example, that
\begin{footnotesize}
  \[\inj_0 (u) \, [x_0.v_0, x_1.v_1] \; := \; ( \efq{F}{}((\lam x^{A
\impl \fal \lam y^{B \impl \fal}\, x\,u) \, \send{a}\, \send{b})} \p_a
v_0 [a/x_0] ) \p_b v_1 [b/x_1] \]
  \[\mapsto^*\; ( \efq{F}{}(\send{a}\, u) \p_a v_0 [a/x_0] ) \p_b v_1
[b/x_1] \; \mapsto\; v_0 [u/x_0] \p_b v_1 [b/x_1] \; \mapsto\; v_0
[u/x_0] \]
\end{footnotesize}
\end{example}

\begin{example}[\textbf{Cross reductions for program efficiency}]
\label{ex:optimize} 

We show how to use cross reductions 
to communicate
processes that are still waiting for some arguments.
%
%
Consider the process $M \parallel_{a} (
Q \parallel_{b} P)$. The process $Q$
contains a channel $b$ to send a message (yellow pentagon) to $P$
(below left), but
the message is missing a part (yellow square) which is computed by $M$
and sent to $Q$ by $a$. In a system without a closure handling
mechanism, the whole interaction needs to wait until $M$ can
communicate  to $Q$ (below right).
\begin{center}
\vspace{-4pt}
\includegraphics[width=0.4\textwidth]{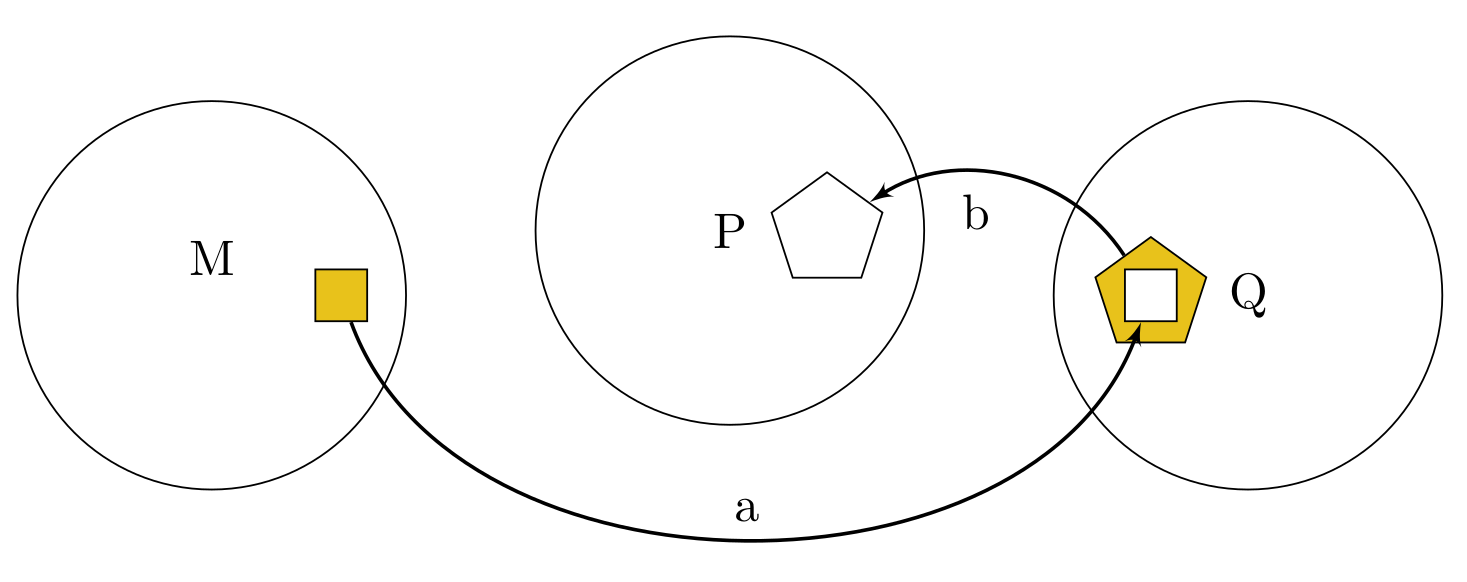} $\qquad\;$
\includegraphics[width=0.4\textwidth]{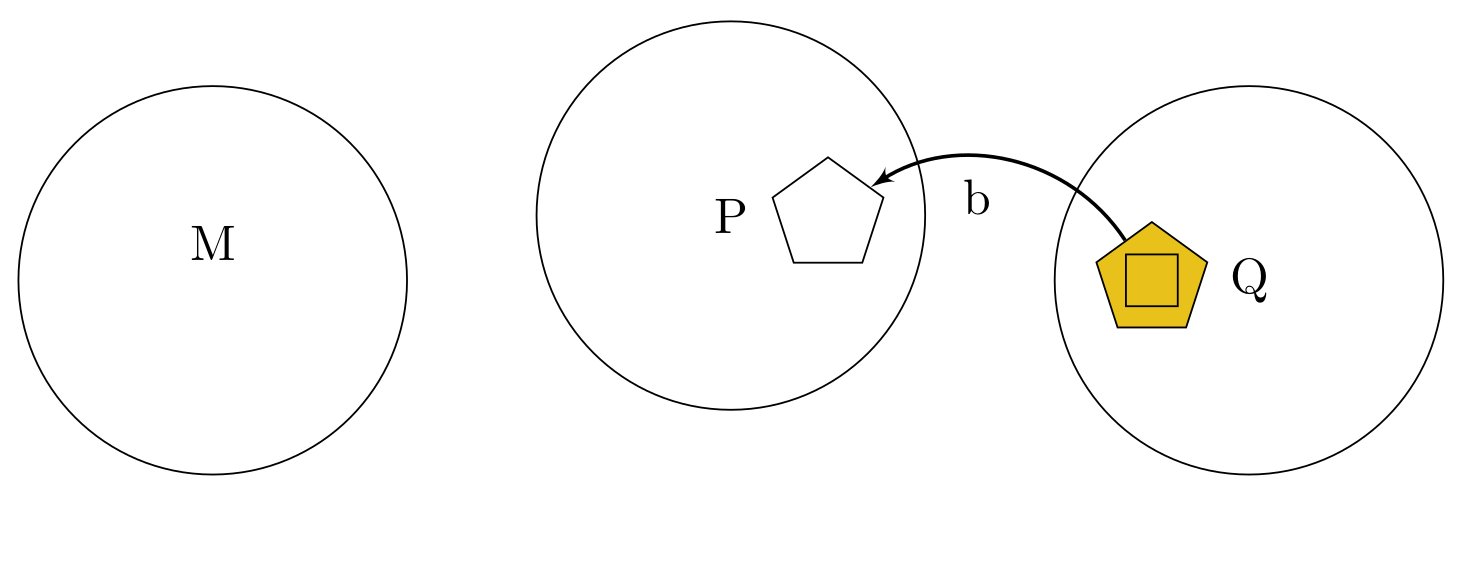}
\vspace{-4pt}
\end{center}The cross reduction handles precisely this kind of missing
arguments. It enables $Q$ to send immediately the message through the
channel $a$ and establishes a new communication channel $c$ on the fly
(below left)
which redirects the missing term, when ready, to the new location of
the message inside $P$ (below right).
\begin{center}
\vspace{-4pt}
  \includegraphics[width=0.4\textwidth]{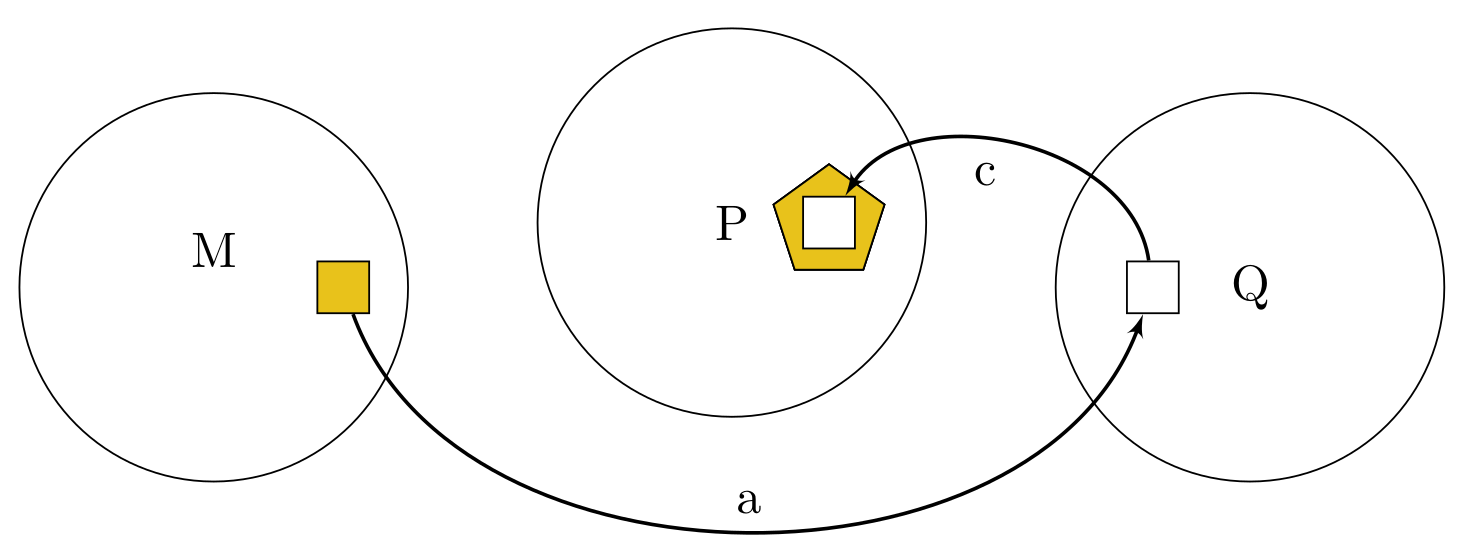} $\qquad\;$
  \includegraphics[width=0.4\textwidth]{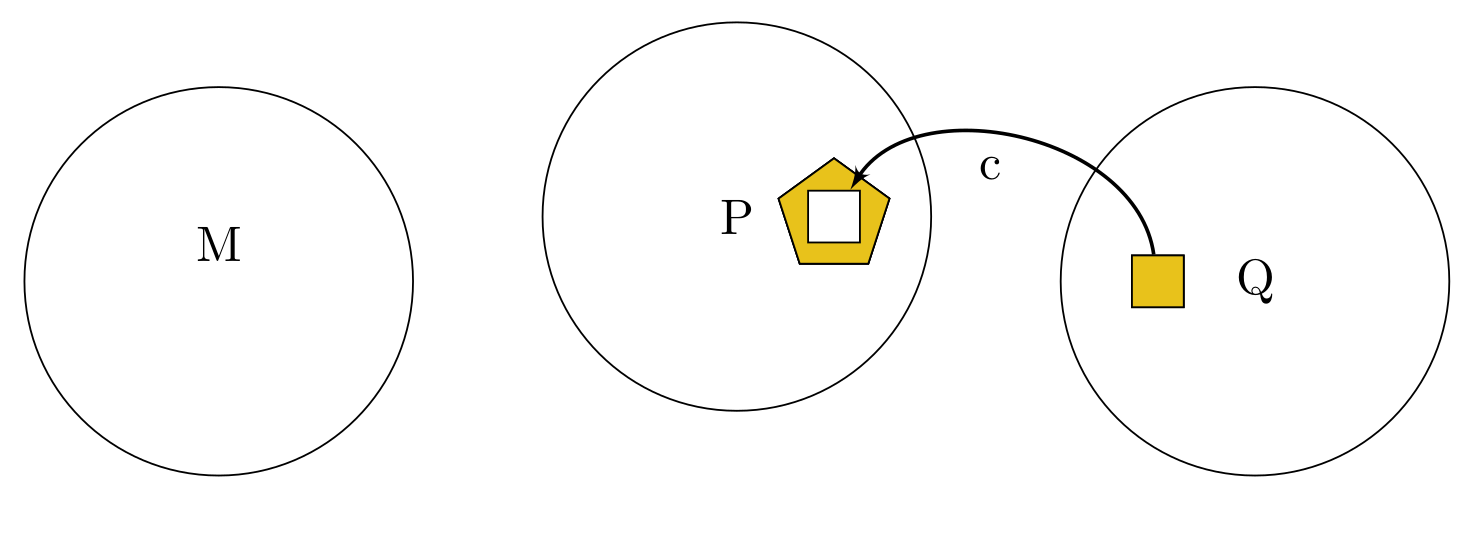}
\vspace{-4pt}
\end{center} We can now partially evaluate $P$,
which in the best case will not even need the yellow square. 

\noindent Both reductions terminate then with $
\quad \vcenter{\includegraphics[width=0.4\textwidth]{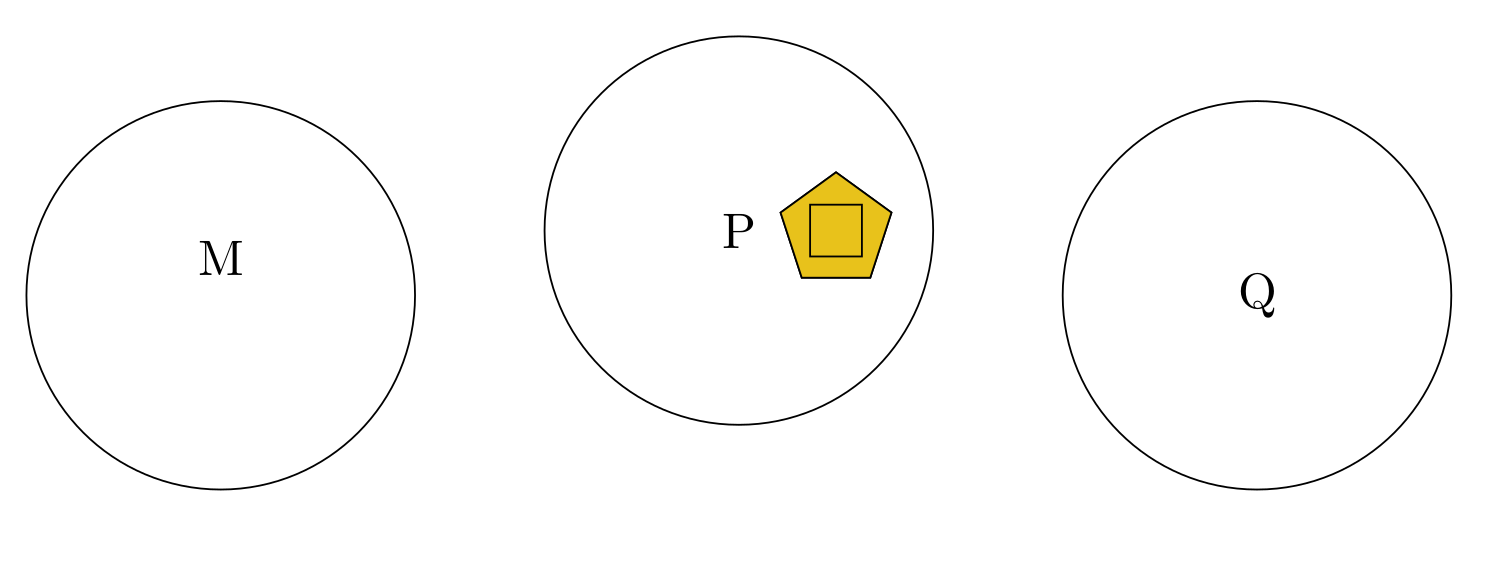}} $ the former,
sending the whole message (yellow pentagon and square) by $b$; the latter, redirecting the missing part of the
message (yellow square) by the new channel  $c$.
 For a concrete example assume that\begin{small}
\[    M \; \mapsto^{*}\; \efq{S}{(\send{a} \, (\lambda x^{T \IMPL \FAL} \, x\, t ))} \qquad\quad
    Q  \; = \;   \efq{S}{(a (\lam y^T \, \send{b} \lan s ,
        y \ran ))}
          \qquad\quad 
    P  \; = \;  b\pi_0 \]
\end{small}where $s:S$ and $t:T$ are closed terms, the complexity of
$S$ is much higher than that of $T$, $b: S
\ET T, \send{b}: \NON (S
\ET T), a: (T \IMPL \FAL)\IMPL \FAL$ and $ \send{a}:\NON ((T \IMPL
\FAL)\IMPL \FAL) $. Without a special mechanism for sending open terms, $Q$ must wait for $M$ to
normalize. Afterwards $M$
sends $\lambda x^{T \IMPL \FAL} \, x\,t$ by $a$ to
$Q$:\begin{small} \begin{align*} & M \parallel_{a} ( Q \parallel_{b}
P) \overset{\mbox{\begin{small}since $a$ is not in
$P$\end{small}}}{\mapsto^{*}} (M \parallel_{a} Q) \parallel_{b} P
\;\mapsto^{*} \; (\efq{S}{(\send{a} \, (\lambda x^{T \IMPL \FAL} \,
x\, t ))} \parallel_{a} Q) \parallel_{b} P \quad \mapsto \\ &
\efq{S}{(( \lambda x^{T \IMPL \FAL} \, x\, t) (\lam y^T \, \send{b}
\lan s , y \ran ))} \p_b P \; \mapsto \; \efq{S}{ ((\lam y^T \,
\send{b} \lan s , y \ran ) t)} \p_b P \, \; \mapsto \;
\efq{S}{(\send{b} \lan s , t \ran)} \p_b b\pi_0 \; \mapsto \; \, \lan
s, t\ran \pi_0 \; \mapsto\; s
  \end{align*} \end{small} Clearly $P$ does not need $t$ at all. Even
though it waited for the pair $\lan s, t\ran $, $P$ only uses the term
$s$.

Our normalization instead enables  $Q$ to directly send
$\lan s, y\ran  $ to $P$ by executing a full cross reduction:
\begin{small}
  \begin{align*} & M \parallel_{a} (Q \parallel_{b} P ) = M \p_a
(\efq{S} {( a (\lam y^T \, \send{b} \lan s , y \ran )) } \p_b b\pi _0)
\quad   \mapsto^{*} \quad M \p_a ((\efq{S} { ( a (\lam y^T \, \send{c} y )) }
\p_b P) \p _c \lan s , c \ran \pi _0 )
  \end{align*}
\end{small}where the channel $c$ handles the redirection of the data $y^T$ in
case it is available later. In our case $P$
already contains all it needs to terminate its computation, indeed
\begin{small}
  \begin{align*}
    \mapsto\quad  M \p_a (( \efq{S} { (   a (\lam y^T \,  \send{c} y 
    ))} \p_b P) \p _c s) \quad \mapsto^* \quad   s
  \end{align*}
\end{small}
since $s$ does not contain communications
anymore. Notice that the time-consuming normalization of the term $M$
does not even
need to be finished at this point.
\end{example}

\subsection*{Conclusions} 

We introduced $\lamem$, a parallel extension of simply typed
$\lam$-calculus.  The calculus $\lamem$ provides a first computational
interpretation of classical proofs as {\em parallel} programs. Our
calculus is defined via Curry--Howard correspondence using a natural
deduction system based on the $\emiddle$ axiom $A \vel \non A$. The
definition of $\lamem$ exploits ideas and techniques developed
in~\cite{lics2017} for the calculus $\lamg$ based on the linearity
axiom, but the specific features of $\emiddle$ made it possible to
define a significantly simpler calculus with more manageable
reductions -- including those for the transmission of closures.  In
spite of its simplicity, the resulting calculus is more expressive
than simply typed $\lam$-calculus and Parigot's $\lam
_\mu$~\cite{Parigot}. Furthermore terms typed by $( \emiddle )$ admit
communication reductions including broadcast communications
and races.


Finally, we remark that the permutation reductions of parallel operators
undermine a strong normalization result for the calculus. Indeed, such
reductions enable loops similar to those occurring in cut-elimination
procedures for sequent calculi.  Restrictions on the permutations
might be enough to prove strong normalization, but we leave this as an
open problem.

\bibliographystyle{eptcs}
\bibliography{bibgandalf2018}

\noappendix{

\newpage
\appendix
\section{Appendix}


\noindent \textbf{Propostition~\ref{propositionstrongsubf}}~(Characterization of Strong Subformulas)\textbf{.}
Suppose $B$ is any {strong subformula} of $A$. Then:
\begin{itemize}
\item If $A=A_{1}\land \ldots \land A_{n}$, with $n>0$ and $A_{1}, \ldots, A_{n}$ are prime, then $B$ is a proper subformula of one among $A_{1}, \ldots, A_{n}$. 
\item If $A=C\rightarrow D$, then $B$ is a proper subformula of a prime factor of $C$ or $D$. 
\end{itemize}
\begin{proof}\mbox{}
\begin{itemize}
\item Suppose $A=A_{1}\land \ldots \land A_{n}$, with $n>0$ and $A_{1}, \ldots, A_{n}$ are prime. Any prime proper subformula of $A$ is a subformula of one among $A_{1}, \ldots, A_{n}$, so $B$ must be a proper subformula of one among $A_{1}, \ldots, A_{n}$.

\item Suppose  $A=C\rightarrow D$. Any prime proper subformula $\mathcal{X}$ of $A$ is first of all a subformula of $C$ or $D$. Assume now $C=C_{1}\land \ldots \land C_{n}$ and $D=D_{1}\land \ldots \land D_{m}$, with $C_{1}, \ldots, C_{n}, D_{1}, \ldots, D_{m}$ prime. Since  $\mathcal{X}$ is prime, it must be a subformula of one among $C_{1}, \ldots, C_{n}, D_{1}, \ldots, D_{m}$ and since $B$ is a proper subformula of $\mathcal{X}$, it must be a proper subformula of one among $C_{1}, \ldots, C_{n}, D_{1}, \ldots, D_{m}$. 
\end{itemize}
\end{proof}

\noindent \textbf{Proposition~\ref{prop:boundhyp}}~(Bound Hypothesis Property)\textbf{.}
Suppose
\[x_{1}^{A_{1}}, \ldots, x_{n}^{A_{n}}\vdash t: A\]
$t\in\nf$ is a simply typed $\lambda$-term and  $z: B$ a variable occurring bound in $t$. Then one of the following holds:
\begin{enumerate}
\item $B$ is a proper subformula of a prime factor of $A$.
\item $B$ is a strong subformula of one among $A_{1},\ldots, A_{n}$.
\end{enumerate}
\begin{proof}
By induction on $t$.  
\begin{itemize} 
\item $t=x_{i}^{A_{i}}$, with $1\leq i\leq n$. Since by hypothesis $z$ must occur bound in $t$,  this case is impossible.

\item $t=\lambda x^{T} u$, with $A=T\rightarrow U$. If $z=x^{T}$,
since $A$ is a prime factor of itself, we are done. If $z\neq x^{T}$,
then $z$ occurs bound in $u$ and by induction hypothesis applied to
$u: U$, we have two possibilities: i) $B$ is a proper subformula of
a prime factor of $U$ and thus a proper subformula of a prime factor
-- $A$ itself -- of $A$; ii) $B$ already satisfies 2., and we are
done, or $B$ is a strong subformula of $T$, and thus it satisfies 1.

\item $t=\langle u_{1}, u_{2}\rangle$, with $A=T_{1}\land T_{2}$. Then
$z$ occurs bound in $u$ or $v$ and, by induction hypothesis applied to
$u_{1}: T_{1}$ and $u_{2}: T_{2}$, we have two possibilities: i) $B$
is a proper subformula of a prime factor of $T_{1}$ or $T_{2}$, and
thus $B$ is a proper subformula of a prime factor of $A$ as well; ii)
$B$ satisfies 2. and we are done.

\item $t= \efq{P}{u}$, with $A=P$. Then $z$ occurs bound in $u$. Since
$\bot$ has no proper subformula, by induction hypothesis applied to
$u: \bot$, we have that $B$ satisfies 2. and we are done.

\item $t=x_{i}^{A_{i}}\, \xi_{1}\ldots \xi_{m}$, where $m>0$ and each
$\xi_{j}$ is either a term or a projection $\pi_{k}$. Since $z$ occurs
bound in $t$, it occurs bound in some term $\xi_{j}: T$, where $T$ is
a proper subformula of $A_{i}$. By induction hypothesis applied to
$\xi_{j}$, we have two possibilities: i) $B$ is a proper subformula of
a prime factor of $T$ and, by def.~\ref{definitionstrongsubf},
$B$ is a strong subformula of $A_{i}$.
ii) $B$ satisfies 2. and we are done.
\end{itemize}
\end{proof}

 \noindent \textbf{Proposition~\ref{prop:app}}\textbf{.}
   Suppose that $t\in \nf$ is a simply typed $\lambda$-term and
 \[x_{1}^{A_{1}}, \ldots, x_{n}^{A_{n}}, z^{B}\vdash t: A\]
 Then one of the following holds:
 \begin{enumerate}
 \item \emph{Every occurrence of $z^{B}$ in $t$ is of the form
     $z^{B}\, \xi$ for some proof term or projection $\xi$.}
 \item \emph{$B=\bot$ or $B$ is a subformula of $A$ or a proper
     subformula of one among the formulas $A_{1}, \ldots, A_{n}$.}
 \end{enumerate}

 \begin{proof}
 By induction on $t$. 
 \begin{itemize} 
 \item $t=x_{i}^{A_{i}}$, with $1\leq i\leq n$. Trivial.
 \item $t=z^{B}$. This means that $B=A$, and we are done.
 \item $t=\lambda x^{T} u$, with $A=T\rightarrow U$. By induction hypothesis applied to $u: U$, we have two possibilities:  i)  every occurrence of $z^{B}$ in $u$ is of the form $z^{B}\, \xi$, and we are done; ii) $B=\bot$ or $B$ is a  subformula of  $U$,  and hence of $A$, or a  proper subformula of one among the formulas $A_{1}, \ldots, A_{n}$, and we are done again.
 \item $t=\langle u_{1}, u_{2}\rangle$, with $A=T_{1}\land T_{2}$. By
 induction hypothesis applied to $u_{1}: T_{1}$ and $u_{2}: T_{2}$, we
 have two possibilities: i) every occurrence of $z^{B}$ in $u_{1}$ and
 $u_{2}$ is of the form $z^{B}\, \xi$, and we are done; ii) $B=\bot$ or
 $B$ is a subformula of $T_{1}$ or $T_{2}$, and hence of $A$, or a
 proper subformula of one among the formulas $A_{1}, \ldots, A_{n}$,
 and we are done again.
 \item $t= \efq{P}{u}$, with $A=P$.  By induction hypothesis applied to $u: \bot$, we have two possibilities: i)  every occurrence of $z^{B}$ in $u$ is of the form $z^{B}\, \xi$, and we are done; ii) $B=\bot$  or a  proper subformula of one among 
   $A_{1}, \ldots, A_{n}$, and we are done again.
 \item $t=x_{i}^{A_{i}}\, \xi_{1}\ldots \xi_{m}$, where $m>0$ and each
 $\xi_{j}$ is either a term or a projection $\pi_{k}$. Suppose there is
 an $i$ such that in the term $\xi_{j}: T_{j}$ not every occurrence of
 $z^{B}$ in $u$ is of the form $z^{B}\, \xi$. If $B=\bot$, we are
 done. If not, then by induction hypothesis $B$ is a subformula of
 $T_{j}$ or a proper subformula of one among $A_{1}, \ldots,
 A_{n}$. Since $T_{j}$ is a proper subformula of $A_{i}$, in both cases
 $B$ is a proper subformula of one among $A_{1}, \ldots, A_{n}$.
 \item $t=z^{B}\, \xi_{1}\ldots \xi_{m}$, where $m>0$ and each
 $\xi_{i}$ is either a term or a projection $\pi_{j}$. Suppose there is
 an $i$ such that in the term $\xi_{i}: T_{i}$ not every occurrence of
 $z^{B}$ in $u$ is of the form $z^{B}\, \xi$. If $B=\bot$, we are
 done. If not, then by induction hypothesis $B$ is a subformula of
 $T_{i}$ or a proper subformula of one among $A_{1}, \ldots,
 A_{n}$. But the former case is not possible, since $T_{i}$ is a proper
 subformula of $B$, hence the latter holds.
 \end{itemize}
 \end{proof}




\noindent \textbf{Proposition~\ref{prop:parallelform}}~(Parallel Form Property)\textbf{.}

  Suppose $t\in \nf$, then it is parallel form.
  \begin{enumerate}
  \item \emph{Every occurrence of $z^{B}$ in $t$ is of the form
      $z^{B}\, \xi$ for some proof term or projection $\xi$.}
  \item \emph{$B=\bot$ or $B$ is a subformula of $A$ or a proper
      subformula of one among the formulas $A_{1}, \ldots, A_{n}$.}
  \end{enumerate}
\begin{proof}
By induction on $t$. 
\begin{itemize}

\item $t$ is a variable $x$. Trivial. 

\item $t=\lambda x\, v$. Since $t$ is normal, $v$ cannot be of the
form $u_1 \p_a u_2$, otherwise one could apply the
permutation \[t= \lam x^{A} \, u_1 \p_a u_2 \mapsto
\lam x^{A} \, u_{1} \p_a \lam x^{A} \, u_2 \] and $t$
would not be in normal form. Hence, by induction hypothesis $v$ must
be a simply typed $\lambda$-term.

\item $t=\langle v_{1}, v_{2}\rangle$. Since $t$ is normal, neither
$v_{1}$ nor $v_{2}$ can be of the form $u_1 \p_a u_2$,
otherwise one could apply one of the permutations
\[\langle u_1 \p_a u_2 , \, w\rangle \mapsto
\lan u_1 , w \ran  \p_a \lan u_2, w\ran \]
\[\langle w, \,u_1 \p_a u_2\rangle \mapsto
\lan w, u_{1}\ran \p_a \lan w, u_{2}\ran\] and $t$ would
not be in normal form. Hence, by induction hypothesis $v_{1}$ and
$v_{2}$ must be simply typed $\lambda$-terms.

\item $t=v_1 \, v_2$. Since $t$ is normal, neither $v_1$ nor $v_2$ can
be of the form $u_1\p_a u_2$, otherwise one could
apply one of the permutations \[w\, (u_1\p_a u_2 )
\mapsto wu_1\p_a wu_2  \]  \[(u_1\p_a u_2 ) \, w
\mapsto u_1w\p_a u_2  w \]
and $t$ would
not be in normal form. Hence, by induction hypothesis $v_{1}$ and
$v_{2}$ must be simply typed $\lambda$-terms.

\item $t=  \efq{P}{v}$. Since $t$ is normal, $v$ cannot
be of the form $u_1\p_a u_2$, otherwise one could apply the permutation
\[\efq{P}{u_1\p_a u_2} \mapsto
\efq{P}{u_{1}}\p_a \efq{P}{u_{2}}\] and $t$ would not be in
normal form. Hence, by induction hypothesis $u_{1}$ and $u_{2}$ must
be simply typed $\lambda$-terms.

\item $t=u\, \pi_{i}$. Since $t$ is normal, $v$ can
be of the form $u_1 \p_a u_2$, otherwise one could apply the permutation
\[(u_1 \p_a u_2 )\, \pi_{i} \mapsto u_{1}\pi_{i}\p_a
 u_2\pi_{i}\] and $t$ would not be in normal form. Hence, by
induction hypothesis $u$ must be a simply typed $\lambda$-term, which
is the thesis.

\item $t= u_1 \p_a u_2$. By induction hypothesis the thesis holds
for $u_i$ where $1 \leq i \leq 2$ and hence trivially for $t$.
\end{itemize}
\end{proof}



\noindent \textbf{Proposition~\ref{propositionnormpar}}\textbf{.}
Let $t: A$ be any term. Then $t\mapsto^{*} t'$, where $t'$ is a parallel form. 
\begin{proof}
By induction on $t$.  As a shortcut, if a term $u$ reduces to a term
$u'$ that can be written as $u''$ omitting parentheses and the
subscript of $\p$ operators, we write $u \mapstopar^{*} u''$.
\begin{itemize}
\item  $t$ is a variable $x$. Trivial. 
\item $t=\lambda x\, u$. By induction hypothesis, 
\[u\mapstopar^{*} u_{1}\parallel u_{2}\parallel \ldots \parallel u_{n+1}\]
and each term $u_{i}$, for $1\leq i\leq n+1$, is a simply typed $\lambda$-term. Applying $n$ times the permutations
we obtain
\[t\mapstopar^{*} \lambda x\, u_{1}\parallel \lambda x\, u_{2}\parallel
  \ldots \parallel \lambda x\, u_{n+1}\]
which is the thesis.
\item $t=u\, v$. By induction hypothesis, 
\[u\mapstopar^{*} u_{1}\parallel u_{2}\parallel \ldots \parallel u_{n+1}\]
\[v\mapstopar^{*} v_{1}\parallel v_{2}\parallel \ldots \parallel v_{m+1}\]
and each term $u_{i}$ and $v_{i}$, for $1\leq i\leq n+1, m+1$, is a
simply typed $\lambda$-term. Applying $n+m$ times the permutations we obtain
\[
\begin{aligned}
t &\mapstopar^{*} (u_{1}\parallel u_{2}\parallel \ldots \parallel  u_{n+1})\, v \\
&\mapstopar^{*}  u_{1}\, v \parallel  u_{2}\, v \parallel
\ldots \parallel  u_{n+1}\, v\\
&\mapstopar^{*} u_{1}\, v_{1} \parallel u_{1}\, v_{2}\parallel
\ldots \parallel u_{1}\, v_{m+1} \parallel \ldots
\\
& \qquad  \, \ldots \parallel u_{n+1}\, v_{1} \parallel  u_{n+1}\,
v_{2} \parallel \ldots
 \parallel  u_{n+1}\, v_{m+1}
\end{aligned}
\]

\item $t=\langle u, v\rangle$. By induction hypothesis, 
\[u\mapstopar^{*} u_{1}\parallel  u_{2}\parallel \ldots \parallel u_{n+1}\]
\[v\mapstopar^{*} v_{1}\parallel v_{2}\parallel \ldots \parallel v_{m+1}\]
and each term $u_{i}$ and $v_{i}$, for $1\leq i\leq n+1, m+1$, is a
simply typed $\lambda$-term. Applying $n+m$ times the permutations we
obtain
  \[
    \begin{aligned}
      t &\mapstopar^{*}\langle u_{1}\parallel u_{2}\parallel
      \ldots \parallel  u_{n+1},\, v \rangle\\
      &\mapstopar^{*} \langle u_{1}, v\rangle \parallel \langle u_{2}, v
      \rangle\parallel \ldots \parallel \langle u_{n+1}, v\rangle\\
      &\mapstopar^{*} \langle u_{1}, v_{1}\rangle \parallel
      \langle u_{1}, v_{2}
      \rangle\parallel \ldots \parallel \langle u_{1},
      v_{m+1}\rangle \parallel  \ldots
      \\
      & \qquad \, \ldots
      \parallel \langle u_{n+1},
      v_{1}\rangle \parallel \langle u_{n+1}, v_{2}
      \rangle\parallel \ldots
      \\
      & \qquad \, \ldots \parallel \langle u_{n+1},
      v_{m+1}\rangle
    \end{aligned}
  \]

\item $t=u\, \pi_{i}$. By induction hypothesis,
\[u\mapstopar^{*} u_{1}\parallel  u_{2}\parallel \ldots \parallel u_{n+1}\]
and each term $u_{i}$, for $1\leq i\leq n+1$, is a simply typed
$\lambda$-term. Applying $n$ times the permutations we obtain
\[t\mapstopar^{*}  u_{1}\, \pi_{i}\parallel  u_{2}\,
\pi_{i}\parallel \ldots \parallel u_{n+1} \, \pi_{i}.\]

\item $t= \efq{P}{u} $. By induction hypothesis,
\[u\mapstopar^{*} u_{1}\parallel  u_{2}\parallel \ldots \parallel u_{n+1}\]
and each term $u_{i}$, for $1\leq i\leq n+1$, is a simply typed
$\lambda$-term. Applying $n$ times the permutations we obtain
\[t\mapstopar^{*} \efq{P}{u_{1}} \parallel \efq{P}{u _2 } \parallel
\ldots \parallel \efq{P}{u_{n+1} }\]
\end{itemize}
\end{proof}

\noindent \textbf{Theorem~\ref{thm:subf}}~(Subformula Property)\textbf{.}
Suppose
\[x_{1}^{A_{1}}, \ldots, x_{n}^{A_{n}}\vdash t: A \quad \mbox{and} \quad
t\in \nf. \quad \mbox{Then}:\]
\begin{enumerate}
\item For each communication variable $a$ occurring bound in  $t$ and with
    communication kind $C$, the prime factors of
$C$  are proper subformulas of  
$A_{1}, \ldots, A_{n}, A$. 
\item The type of any subterm of $t$ which is not a bound communication variable is either a subformula or a conjunction of subformulas of the formulas $A_{1}, \ldots, A_{n}, A$. 
\end{enumerate}
\begin{proof} 
We proceed by induction on $t$. 
By Prop.~\ref{prop:parallelform} if we remove the
subscripts $t = t_{1}\parallel  t_{2}\parallel \ldots \parallel t_{n+1}$
and each $t_{i}$, for $1\leq i\leq n+1$, is a simply typed
$\lambda$-term.

By induction on $t$. 

\begin{itemize} 
\item $t=x_{i}^{A_{i}}$, with $1\leq i\leq n$. Trivial. 

\item $t=\lambda x^{T} u$, with $A=T\rightarrow U$. By Prop.~\ref{prop:parallelform}, $t$ is a simply typed $\lambda$-term, so $t$ contains no bound communication variable. Moreover, by induction hypothesis applied to $u: U$, the type of any subterm of $u$ which is not a bound communication variable is either a subformula or a conjunction of subformulas of the formulas $T, A_{1}, \ldots, A_{n},  U$ and hence of the formulas $A_{1}, \ldots, A_{n},  A$.


\item $t=\langle u_{1}, u_{2}\rangle$, with $A=T_{1}\land T_{2}$. By Prop.~\ref{prop:parallelform}, $t$ is a simply typed $\lambda$-term, so $t$ contains no bound communication variable. Moreover, by induction hypothesis applied to $u_{1}: T_{1}$ and $u_{2}: T_{2}$, the type of any subterm of $u$ which is not a bound communication variable is either a subformula or a conjunction of subformulas of the formulas $ A_{1}, \ldots, A_{n}, T_{1}, T_{2}$ and hence of the formulas $A_{1}, \ldots, A_{n},  A$.

\item $t =  u_1\p_a u_2  $. Let      $C$ be the communication
kind of $b$, we first show that the communication complexity of $b$ is $0$.
  We reason by contradiction and assume that it is greater than $0$. 
 $u_1,u_2$ are either  simple parallel terms or of the form $v_1 \p_cv_2
$. The second case is not possible, otherwise a permutation reduction
could be applied to $t\in \nf$. Thus $u_{1},u_{2}$ are   simple parallel
terms. Since the communication complexity of $b$ is
greater than $0$, the types of the occurrences of $b$ in $u_{1}, u_{2}$ are not subformulas of $A_{1}, \ldots, A_{n}, A$. 
 By
 Prop.~\ref{prop:app}, all occurrences of $b$ in $u_{1}
, u_{2}$ are of the form $\send{b} w $ for some term $w$.
Hence,  we can write
\[ t= (\dots \p {\mathcal C} [\send{b}\,t]\p \dots) \p_b {\mathcal
    D} \] where ${\mathcal C} [\; ]$ is a simple context, $(\dots \p
{\mathcal C} [\send{b}\,t]\p \dots)$ is a normal simple parallel term,  ${\mathcal
  D}$ is a normal term, and $b$ is
 rightmost in $  {\mathcal C} [\send{b}\,t] $.  Hence a cross reduction of $t$ can be performed, which contradicts
the fact that $t\in\nf$.  Since we have established that the
communication complexity of $b$ is $0$,      the prime factors of $C$  must be proper subformulas of $A_{1}, \ldots, A_{n},
A$. Now, by induction hypothesis applied to $u_{1}:A,u_{2}:A$, for each communication variable  $a^{F}$
 occurring bound in $t$,      the prime factors of $F$ are proper
subformulas of the formulas $ A_{1}, \ldots, A_{n}, A $ or    subformulas
of $ C$   and thus of the formulas $A_{1}, \ldots, A_{n}, A$;
moreover, the type of any subterm of $u_{1}, u_{2}$ which is
not a communication variable is either a subformula or a  conjunction
of subformulas of the formulas   $ A_{1}, \ldots, A_{n}, C$  and thus of $A_{1}, \ldots, A_{n}, A$.

\item $t=x_{i}^{A_{i}}\, \xi_{1}\ldots \xi_{m}$, where $m>0$ and each
$\xi_{j}: T_{j}$ is either a term or a projection $\pi_{k}$ and
$T_{j}$ is a subformula of $A_{i}$. By Prop.~\ref{prop:parallelform}, $t$ is a simply typed $\lambda$-term,
so $t$ contains no bound communication variable. By induction
hypothesis applied to each $\xi_{j}: T_{j}$, the type of any subterm
of $t$ which is not a bound communication variable is either a
subformula or a conjunction of subformulas of the formulas $ A_{1},
\ldots, A_{n}, T_{1}, \ldots, T_{m}$ and thus of the formulas
$A_{1},  \ldots, A_{n}, A$.

\item $t= \efq{P}{u}$, with $A=P$. Then $u=x_{i}^{A_{i}}\, \xi_{1}\ldots \xi_{m}$, where $m>0$ and each $\xi_{j}$ is either a term or a projection $\pi_{k}$. Hence, $\bot$ is a subformula of $A_{i}$. Finally, by the proof of the previous case, we obtain the thesis for $t$.
\end{itemize}
\end{proof}


}

\end{document}